\documentclass[11pt]{article}
\usepackage[margin=1in]{geometry}

\usepackage{amsmath,amsbsy}
\usepackage{amsthm} %
\usepackage{amsfonts}
\usepackage{setspace}
\usepackage{color}
\usepackage{xcolor}
\usepackage{algorithm}
\usepackage[noend]{algpseudocode}
\usepackage{multirow}
\usepackage{paralist}
\usepackage{xspace}
\usepackage{threeparttable}
\usepackage{booktabs}
\usepackage{caption}
\usepackage{subcaption}
\definecolor{MyGreen}{rgb}{0.1333,0.5451,0.1333}
\usepackage[linktocpage=true, 
	pagebackref=true,colorlinks,
  linkcolor=blue, urlcolor=blue,
	citecolor=MyGreen,
	bookmarks,bookmarksopen,bookmarksnumbered]
	{hyperref}
\let\originalleft\left
\let\originalright\right
\renewcommand{\left}{\mathopen{}\mathclose\bgroup\originalleft}
\renewcommand{\right}{\aftergroup\egroup\originalright}

\renewcommand{\Pr}{\operatorname*{\textbf{\textup{Pr}}}}

\DeclareMathOperator*{\dcup}{\mathbin{\dot\cup}}

\DeclareMathOperator*{\EE}{\textbf{\textup{E}}}

\renewcommand{\epsilon}{\varepsilon}
\newcommand{\set}[1]{\{ #1 \}}
\newcommand{\Set}[1]{\left\{ #1 \right\}}

\newcommand{\lb}{\left}
\newcommand{\rb}{\right}
\newcommand{\lt}{\left}
\newcommand{\rt}{\right}
\newcommand{\md}{\middle}

\newcommand{\range}{\textsf{range}}

\DeclareMathOperator{\poly}{\ensuremath{\mathrm{poly}}}

\makeatletter
\newtheorem*{rep@theorem}{\rep@title}
\newcommand{\newreptheorem}[2]{%
\newenvironment{rep#1}[1]{%
\def\rep@title{#2 \ref{##1}}%
\begin{rep@theorem}[restated]}%
{\end{rep@theorem}}}
\makeatother
\newreptheorem{lemma}{Lemma}
\newreptheorem{theorem}{Theorem}
\newreptheorem{corollary}{Corollary}

\newboolean{short}
\newcommand{\onlyShort}[1]{\ifthenelse{\boolean{short}}{#1}{}}
\newcommand{\onlyLong}[1]{\ifthenelse{\boolean{short}}{}{#1}}

\theoremstyle{definition}

\newtheorem{definition}{Definition}

\theoremstyle{plain}
\newtheorem{lemma}{Lemma}
\newtheorem{theorem}{Theorem}
\newtheorem{corollary}{Corollary}
\newtheorem{open problem}{Open Problem}

\usepackage[most]{tcolorbox}
\tcbuselibrary{theorems}

\newcommand{\ann}[1]{%
\text{\footnotesize(#1)}\quad}

\usepackage{thmtools}
\usepackage{mathtools}

\title{Polynomial Lower Bounds for Distributed Graph Sketching with Tiny Error: Connectivity and Spanning Tree Construction} 

\date{}

\author{Peter Robinson\thanks{Peter Robinson
was supported in part by National Science Foundation (NSF) grants CCF-2402836 and CCF-2552881.}\\
\small{Computer \& Cyber Sciences}\\
\small{Augusta University}
\and
Ming Ming Tan\thanks{
Ming Ming Tan was supported in part by National Science Foundation (NSF)
grant CCF-2348346 CRII.} \\
\small{Computer \& Cyber Sciences}\\
\small{Augusta University}
}

\setboolean{short}{false}

\newcommand{\fail}{\ensuremath{\textsf{Fail}}}
\newcommand{\allunsafe}{\ensuremath{\textsf{All-Unsafe}}}
\newcommand{\allbad}{\ensuremath{\textsf{All-Dangerous}}}
\newcommand{\Gconn}{\ensuremath{G}_{X,J}^{\text{conn}}}
\newcommand{\GST}{\ensuremath{G}_{X}^{\text{ST}}}
\newcommand{\Dconn}{\ensuremath{\mathcal{D}}_{n,q}^{\text{conn}}}
\newcommand{\DST}{\ensuremath{\mathcal{D}}_{n,q}^{\text{ST}}}

\begin{document}

\maketitle
\begin{abstract}
We present the first polynomial lower bounds for several fundamental problems in the distributed graph sketching model in the tiny-error regime, which includes deterministic algorithms as a special case. In the graph sketching model, every node sends a single message to the referee who does not have any prior knowledge of the graph and must output the answer. 
While the work of Nelson and Yu (SODA 2019) and Yu (SODA 2021) showed that $\Theta\lt( \log^3n \rt)$ is optimal for constructing a spanning forest or deciding whether the graph is connected with error at most $\frac{1}{\poly(n)}$ , their approach does not yield any stronger bounds for significantly smaller error probabilities.

Our main result is to show that solving either connectivity or spanning tree construction with error at most $\delta$ requires messages of length $\Omega\lt( \min\set{n, \log_2 \tfrac{1}{\delta}}^{1/3} \rt)$, which implies that algorithms with exponentially small error must send messages of $\Omega\lt( n^{1/3} \rt)$ bits in the worst case.
Our results significantly narrow the current gap between the Nelson-Yu threshold of $\Theta\lt( \log^3n \rt)$ and the trivial upper bound of sending $O(n)$ bits per node for deterministic graph sketching.

We also extend our results to $k$-edge connectivity. 
For any $k=O(n^{1/7})$, we recover the same bound of $\Omega\lt( k \rt)$ on the message length for algorithms with exponentially small error that was shown by Robinson and Tan (PODS 2026) only for deterministic algorithms.
Finally, for $k=n^{o(1)}$, our result implies a stronger lower bound of $\Omega_\epsilon\lt( n^{\epsilon} \rt)$ bits, for any constant $\epsilon<\tfrac{1}{3}$.
\end{abstract}

\section{Introduction} \label{sec:intro}

In the distributed graph sketching model~\cite{becker2011adding}, we consider a graph of $n$ nodes that are equipped with unique integer IDs and a referee. 
The input of each node $u$ consists of the list of its neighbors' IDs in the graph, and $u$ may send a single message to the referee, who produces the output depending on the received messages. 
The naive approach for solving any problem in the sketching model is to simply instruct every node to send a message containing all its neighbors' IDs to the referee, which, however, requires a linear message length and does not scale to large graphs. 
Thus, the main challenge is designing algorithms that are \emph{efficient}, in the sense that each node's message is limited to a polylogarithmic number of bits, i.e., nodes only send a \emph{sketch} of their neighborhoods to the referee.

Deciding whether the graph is connected or constructing a spanning tree are arguably some of the most fundamental graph problems. 
The work of Ahn, Guha, and McGregor~\cite{AGM-soda12} shows that these problems have low-memory implementations in the semi-streaming setting, and it is straightforward to adapt their technique to yield distributed graph sketching algorithms that correctly solve these problems with probability at least $1 -\frac{1}{\poly(n)}$, while sending messages of $O\lt( \log^3n \rt)$ bits per node. 
The optimality of this threshold remained open for several years, until Nelson and Yu~\cite{NY19-soda} showed that $\Theta\lt( \log^3n \rt)$ is indeed tight for spanning forest construction; subsequently, Yu~\cite{Y-soda21} proved that the same lower bound holds even for connectivity testing.
A natural question to ask is whether access to randomness and allowing a small probability of error are indeed necessary for obtaining efficient graph sketching algorithms with small-length messages. 
To date, there are no deterministic upper bounds known for connectivity in the distributed graph sketching model that would ensure a worst case message size of $o(n)$, which suggests that we should not hope for a deterministic connectivity algorithm with polylogarithmic message-size. 
Interestingly, the aforementioned lower bounds of \cite{NY19-soda,Y-soda21} do not suffice to resolve this question, as their simulation argument introduces an additive error term of roughly $1/n^{\epsilon}$, for some constant $\epsilon>0$, which prevents these techniques from yielding stronger bounds for algorithms with a much smaller probability of error.  (We elaborate in more detail on this point in Section~\ref{sec:overview}.)  
Consequently, the best known bounds in the existing literature leave a large gap between the $\Omega\lt(  \log^3 n\rt)$ barrier and the trivial upper bound of $O(n)$ bits for deterministic connectivity and spanning tree algorithms in the distributed graph sketching model.
Obtaining lower bounds on the deterministic complexity of connectivity has been of significant interest to the graph sketching community.
For instance, \cite{assadi2022lower} (p.\ 102) mentions 
\begin{quote}
\sl
``Prove an $n^{\Omega\lt( 1 \rt)}$ lower bound  for deterministic connectivity''
\end{quote}
under ``\textsl{Harder Open Problems}'' as a ``\textsl{longstanding open question}'', and \cite{DBLP:journals/corr/abs-2510-16336} state 
\begin{quote}
\sl
``Establishing lower bounds in other settings---distributed or streaming,
deterministic or randomized---remains an intriguing open question.''
\end{quote}
Our work takes a significant step towards resolving this fundamental question.

\subsection{Preliminaries: Sketching Model and Graph Problems} \label{sec:prelim}
We consider $n$ nodes each of which is equipped with a unique integer ID of $\Theta(\log n)$ bits from $[n]$.
Every node knows its neighbors' IDs as well as $n$, and may send a single message of $B$ bits to the referee, who does not have any prior information about the graph and who computes the answer based on the received messages.
For randomized algorithms, we assume that the nodes and the referee also have shared access to an infinite string of random bits. 

For \emph{spanning tree construction}, the referee needs to output the list of edges that form a spanning tree of the graph, whereas for \emph{connectivity}, the referee simply outputs a bit to indicate whether the graph is connected.
Finally, \emph{$k$-edge connectivity} requires the referee to answer whether every nontrivial cut of the graph contains at least $k$ edges.

\subsection{Our Contributions} \label{sec:contrib}

We present the first polynomial lower bounds for connectivity testing and spanning tree construction in the \emph{tiny error} regime, i.e., algorithms that fail with probability at most $\frac{1}{n^{\poly\log(n)}}$, which includes deterministic algorithms as an important special case.
Our main result is the following: 

\newcommand{\thmMain}{
Any randomized algorithm that decides connectivity  or computes a spanning tree on an $n$-node graph in the distributed graph sketching model with error at most $\delta\le\frac{1}{n^{\log^8n}}$, has a worst case message length of
\begin{align}
\Omega\lt( \min\Set{n, \log_2 \frac{1}{\delta}}^{1/3} \rt).
\notag 
\end{align}
} 

\begin{theorem} \label{thm:main}
\thmMain
\end{theorem}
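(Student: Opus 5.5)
We plan to follow the Nelson--Yu/Yu framework: reduce a multi-party communication problem (a hard instance of Universal Relation / index type) to sketching connectivity. The key change is to replace the simulation step that loses an additive $1/n^{\epsilon}$ with a simulation whose error scales with $\delta$. Set $m=\min\{n,\log_2(1/\delta)\}$ and aim to show $B=\Omega(m^{1/3})$.

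\textbf{Step 1: hard distribution.} We build a random graph family $\Dconn$ (and $\DST$ for spanning trees) on $n$ nodes. It consists of $q\approx m^{1/3}$ blocks ("levels") of roughly $q$ vertices each, with a hidden bit string $X$ of length about $q^3$ and an index $J$. Whether $\Gconn$ is connected is determined by $X_J$. The structure is chosen so that a designated set $S$ of "unsafe" vertices encodes, through their neighborhoods, $\Theta(q^3)$ bits of $X$. All other vertices' neighborhoods can be generated by the referee from public randomness together with $J$ and a short advice string.

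\textbf{Step 2: simulation.} From a sketching algorithm with error $\delta$ we obtain a one-way protocol in which Alice (holding $X$) sends the messages of the vertices in $S$, i.e.\ $|S|\cdot B=O(q^2B)$ bits, and Bob (holding $J$) simulates the rest and outputs $X_J$. The crucial obstacle is that in previous work Bob cannot simulate the remaining vertices exactly, because their neighborhoods depend on $X$; Yu fixes this by rejection or sampling, which costs $1/n^{\epsilon}$ error. Our fix is to make the dependence on $X$ recoverable from a union over all $2^{O(q^3)}$ possible consistent graphs. Since the algorithm errs with probability at most $\delta\le 2^{-m}$ on every graph, a union bound over all $2^{O(q^3)}\le 2^{m/2}$ graphs in the support yields a single fixed random string that is correct on \emph{all} of them simultaneously, provided $q^3\le c\,m$. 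This is the derandomization that fails when only $\delta=1/\poly(n)$. Once the random string is fixed, the algorithm is effectively deterministic on the support. Bob can then enumerate candidate completions and rely on correctness everywhere, so any consistent completion yields the right answer.

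\textbf{Step 3: counting.} With deterministic correctness on the support, the messages of $S$ together with Bob's knowledge must determine $X$ entirely. We enumerate over $J$: the map $X\mapsto(\text{messages of }S)$ must be injective modulo the parts Bob can generate. This gives $q^2 B\ge \Omega(q^3)$, i.e.\ $B=\Omega(q)=\Omega(m^{1/3})$. For spanning tree construction, the output tree reveals which edges exist, so the same counting applies to $\DST$ via $\GST$.

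\textbf{Main obstacle.} The hardest step is designing the gadget so that the non-$S$ vertices' messages depend on $X$ only in a way Bob can emulate without knowing $X$. I would try making every edge incident to the $X$-encoding bipartite part have an $S$ endpoint. Then the non-$S$ endpoint's adjacency changes with $X$, which Bob cannot emulate. To handle this, I would orient the encoding so that the non-$S$ side vertices are few ($\approx q$) and have their messages also sent by Alice, keeping Alice's total at $O(q^2B)$. Balancing these sizes is what forces the cube-root exponent, and the requirement $\delta\le n^{-\log^8 n}$ ensures $m$ is large enough that the $\poly\log$ overheads from IDs are absorbed.
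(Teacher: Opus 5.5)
There is a genuine gap: the step that carries all of the difficulty is asserted rather than proved. Deterministic algorithms are a special case of the theorem, and for them your Step~2 is vacuous. So the entire bound rests on the counting in Step~3, which is precisely the long-standing open deterministic lower bound. That counting gives $B=\Omega(q)$ only if the vertices whose messages may depend on $X$ are far fewer than the bits of $X$ the referee must distinguish, and neither natural gadget achieves this.

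If connectivity equals $X_J$ via an index gadget, then $X$ has one bit per center. For example, $X_i$ selects whether center $c_i$'s shared neighbors lie in $R_0$ or $R_1$, and the only $J$-dependent edge is $\{h_0,s_J\}$ between vertices Bob controls. Charging Alice for all centers then gives only $q^2B=\Omega(q^2)$, i.e., nothing. If instead $X$ is the $q^3$ edge bits of an $S$--$R$ bipartite graph, no single edge decides connectivity, because other paths through $S\cup R$ bypass it. Moreover, any $J$-dependent edge incident to $S\cup R$ would make Alice's inputs depend on $J$. Your ``main obstacle'' paragraph names exactly this problem but does not solve it.

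The missing idea is that the paper never makes the centers emulable. Instead it shows that, on an event of probability $2^{-O(q^3)}$, their messages are literally independent of $X$. IDs are assigned by a uniformly random partition, so each center's support is split by a uniformly random ordered balanced split $(P_i,R_0,R_1)$. Call the split safe if $f_i(T\cup D_0)\ne f_i(T\cup D_1)$ for all $T\subseteq P_i$ and nonempty $D_0\subseteq R_0$, $D_1\subseteq R_1$.

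If the safe splits shatter $Y$, then $f_i$ is injective on the $\lfloor |Y|/2\rfloor$-subsets of $Y$. Hence their $3$-ary VC dimension is at most $2B+3$. The Karpovsky--Milman bound then makes them at most a $2^{-q}$ fraction of all $\binom{3q}{q,q,q}$ splits when $B\le q/100$. With probability $\ge 1/2$ all $q^2$ splits are unsafe. Conditioned on that, with probability $\ge 2^{-3q^3}$ every sampled triple satisfies $f_i(T_i\cup D_i^0)=f_i(T_i\cup D_i^1)$.

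On this event only the $2q$ shared vertices depend on the $q^2$-bit vector $X$, and they carry at most $2qB\le q^2/50$ bits. So the referee errs with constant conditional probability, and the distributional error is $\Omega(2^{-3q^3})$. Yao's lemma with $q\approx(\min\{n,\log_2(1/\delta)\}/64)^{1/3}$ finishes the proof. Thus the tiny error pays for this conditioning, not for a union bound.

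Your union bound is problematic anyway. A uniformly random partition of $\Theta(q^3)$ IDs into blocks of size $q$ already has $2^{\Theta(q^3\log q)}$ outcomes, not $2^{O(q^3)}$. With IDs fixed in advance, a center can simply report the side of its shared neighbors plus one neighbor's ID in $O(\log n)$ bits, which makes such instances easy.
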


Instantiating Theorem~\ref{thm:main} with $\delta=2^{-n}$ immediately gives the following:

\begin{corollary} \label{cor:det}
Any deterministic algorithm for connectivity testing or spanning tree construction requires a worst case message length of $\Omega\lt( n^{1/3} \rt)$ bits in the distributed graph sketching model.
The same bound holds for randomized algorithms with exponentially small error.
\end{corollary}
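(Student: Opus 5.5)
Corollary~\ref{cor:det} follows by instantiating Theorem~\ref{thm:main}, so the plan amounts to checking that the instantiation is valid. I will treat the two parts of the statement separately.

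\textbf{Randomized algorithms with exponentially small error.} Take $\delta = 2^{-n}$. First I check the hypothesis of Theorem~\ref{thm:main}, namely $\delta\le n^{-\log^8 n}$. This amounts to $2^{-n}\le 2^{-\log^9 n}$, which holds for all $n$ larger than some constant, since $\log^9 n = o(n)$. Next I simplify the bound. We have $\log_2\frac{1}{\delta}=n$, so $\min\{n,\log_2\frac1\delta\}=n$, and the theorem gives message length $\Omega(n^{1/3})$. More generally, for any $\delta\le 2^{-cn}$ with a constant $c>0$, the same computation gives $\min\{n,cn\}^{1/3}=\Omega(n^{1/3})$. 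This covers the phrase ``exponentially small error'' in either reading.

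\textbf{Deterministic algorithms.} A deterministic algorithm is a randomized algorithm that ignores the shared random string, and its error probability is $0\le 2^{-n}$. The previous paragraph therefore applies directly and yields the $\Omega(n^{1/3})$ bound for both connectivity testing and spanning tree construction.

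The bound is only claimed asymptotically, so small $n$, where $\log^9 n$ may exceed $n$, is absorbed into the $\Omega$ constant. The one point to confirm is that Theorem~\ref{thm:main} allows error exactly $0$. It does, because an error of $0$ satisfies ``error at most $\delta$'' for every $\delta>0$.
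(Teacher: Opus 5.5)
Your proposal is correct and follows the paper's route exactly: the paper obtains the corollary simply by instantiating Theorem~\ref{thm:main} with $\delta = 2^{-n}$. Your extra checks make explicit what the paper leaves implicit: that $2^{-n}\le n^{-\log^8 n}$ for large $n$, that a deterministic algorithm has error $0\le\delta$, and that $\delta\le 2^{-cn}$ still gives $\Omega_c(n^{1/3})$.
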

At first glance, it may seem that the bound for the spanning tree problem follows immediately from the one for connectivity. While this is true when considering algorithms for the \emph{spanning forest} problem, which must work on all (connected or disconnected) graphs, here we explicitly consider spanning tree algorithms that are only guaranteed to work on connected graphs, which requires a separate argument. 

We point out that the proof of Theorem~\ref{thm:main} remains valid for algorithms with a significantly larger error probability than the stated one. However, the obtained bound reduces to the known $\Theta\lt(  \log^3 n\rt)$ threshold of \cite{Y-soda21,NY19-soda} as soon as $\delta=\Theta\lt( \frac{1}{n^{\log^8n}} \rt)$, and does not yield any useful bound once $\delta\ge \omega\lt( \frac{1}{n^{\log^8n}} \rt)$.

By using a standard ``blow-up'' construction, where we obtain a larger graph by replacing nodes with cliques, we can extend our connectivity result to $k$-edge connectivity:

\begin{corollary} \label{cor:kedge}
Any randomized graph sketching algorithm that decides $k$-edge connectivity on an $n$-node graph with error at most $\delta\le\frac{1}{n^{\log^8n}}$, has a worst case message length of
\begin{align}
\Omega\lt( \frac{1}{k} \cdot \min\Set{\frac{n}{k},\log_2\frac{1}{\delta}}^{1/3}  \rt).
\notag 
\end{align}
For the exponentially small error regime $\delta\le 2^{-\alpha n}$, for fixed $\alpha>0$, and for deterministic algorithms we obtain the following:
\begin{compactitem} 
\item For any $k = O\lt(   n^{1/7}\rt)$, the message length is $\Omega\lt( k \rt)$.
\item If $k=n^{o(1)}$ , then, for every fixed $\epsilon<\frac{1}{3}$, the message length is $\Omega_{\alpha,\epsilon}\lt( n^{\epsilon} \rt)$ bits.\footnote{Note that the notation $\Omega_\gamma\lt( \dots \rt)$ means that the hidden constant may depend on $\gamma$.}
\end{compactitem}
\end{corollary}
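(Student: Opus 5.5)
We reduce connectivity on $N$-node graphs to $k$-edge connectivity on $n = kN$-node graphs by the standard blow-up, and then invoke Theorem~\ref{thm:main}. Let $\mathcal{A}$ be a $k$-edge connectivity sketching algorithm on $n$-node graphs with error $\delta$ and message length $B$.

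\textbf{The blow-up.} Given an $N$-node instance $G$ of connectivity with $N=\lfloor n/k\rfloor$, build $H$ by replacing each node $u$ with a clique $C_u$ of $k$ nodes (IDs $(u,1),\dots,(u,k)$). For every edge $\{u,v\}$ of $G$, add a perfect matching between $C_u$ and $C_v$, i.e.\ the edges $\{(u,i),(v,i)\}$ for $i\in[k]$. Pad with isolated-free dummy structure if $k\nmid n$; for instance, attach the leftover fewer than $k$ nodes to one clique via $k$ edges each.

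\textbf{Correctness of the reduction.} We check that $H$ is $k$-edge connected iff $G$ is connected.
\begin{compactitem}
\item If $G$ is disconnected, the cut separating the cliques of one component has zero edges, so $H$ is not $k$-edge connected.
\item If $G$ is connected, take any nontrivial cut $(S,\bar S)$ of $H$. If it splits some clique $C_u$ into parts of sizes $a$ and $k-a$ with $0<a<k$, it already contains $a(k-a)\ge k-1$ clique edges. One more edge is then obtained from a matching edge, or from an argument across cliques; this boundary case needs care, and adding an extra matching layer settles it if necessary.
\item Otherwise $S$ is a union of whole cliques. Since $G$ is connected, some edge of $G$ crosses the induced cut, and its matching contributes $k$ crossing edges.
\end{compactitem}

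\textbf{Simulation.} Node $u$ of $G$ knows its neighbors, hence knows the full neighborhoods of all $k$ nodes of $C_u$ in $H$. It therefore simulates them with shared randomness and sends their $k$ messages, for a total of $kB$ bits. The referee runs $\mathcal{A}$'s referee on these messages and answers with error $\delta$. By Theorem~\ref{thm:main} applied on $N$ nodes, where $\delta\le n^{-\log^8 n}\le N^{-\log^8N}$ holds, we get
\[
kB=\Omega\Big(\min\{n/k,\log_2(1/\delta)\}^{1/3}\Big),
\]
which is the first claim.

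\textbf{Exponentially small error.} Take $\delta\le 2^{-\alpha n}$, so $\log_2(1/\delta)\ge \alpha n\ge n/k$. The bound becomes $B=\Omega_\alpha\big(k^{-1}(n/k)^{1/3}\big)=\Omega(n^{1/3}k^{-4/3})$. When $k=O(n^{1/7})$ we have $k^{7/3}=O(n^{1/3})$, so this bound is $\Omega(k)$. When $k=n^{o(1)}$ we have $k^{4/3}=n^{o(1)}$, which gives $\Omega_{\alpha,\epsilon}(n^{\epsilon})$ for every $\epsilon<1/3$.

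We also need a trivial $\Omega(k)$ bound independent of the theorem, to cover the regime where the cuts dominate. Even if it is not needed, the first bullet follows from the calculation above.

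\textbf{Main obstacle.} The main obstacle is the cut analysis for cuts that split a clique while also separating components. A cleaner fix is to connect consecutive cliques along $G$'s edges by $k$ matching edges and to note that a split clique gives at least $k-1$ edges plus at least one matching edge leaving the smaller part. Alternatively, use cliques of size $k+1$, which changes only constants.
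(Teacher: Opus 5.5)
Your proposal takes the same route as the paper. You blow up each node of a connectivity instance into a clique, let each original node simulate and concatenate the messages of its cluster, and apply Theorem~\ref{thm:main} to the resulting connectivity protocol. The closing arithmetic also matches: $B=\Omega_\alpha(n^{1/3}k^{-4/3})$, which is $\Omega(k)$ for $k=O(n^{1/7})$ and $\Omega_{\alpha,\epsilon}(n^{\epsilon})$ for $k=n^{o(1)}$.

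The paper avoids your boundary case by doing exactly your fallback. It partitions $[n]$ into clusters of size between $k+1$ and $2(k+1)$, which also absorbs the leftover nodes, and joins adjacent clusters by complete bipartite graphs. A split cluster then already contributes $a(|V_i|-a)\ge |V_i|-1\ge k$ edges.

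If you want to keep cliques of size exactly $k$ joined by matchings, your sketch of the split case is incomplete. The correct case analysis is as follows.
\begin{itemize}
\item If two or more cliques are split, they contribute $2(k-1)\ge k$ edges.
\item If exactly one clique $C_u$ is split, take a $G$-neighbor $v$ of $u$; then $C_v$ lies entirely on one side of the cut. So at least one matching edge leaves the part of $C_u$ lying on the \emph{opposite} side from $C_v$. This need not be the smaller part, as you claimed.
\end{itemize}

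Your step $\alpha n\ge n/k$ is false when $\alpha<1/k$. Write $\min\{n/k,\alpha n\}\ge\min\{1,\alpha\}\,n/k$ instead, as the paper does. Finally, no separate trivial $\Omega(k)$ bound is needed.
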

Note that Corollary~\ref{cor:kedge} recovers the bound of $\Omega\lt( k \rt)$ shown recently by Robinson and Tan~\cite{robinson2026deterministic} for deterministic algorithms.
In fact, as long as $k$ is subpolynomial, we obtain a bound of $\Omega\lt( n^{\epsilon} \rt)$, thus improving over their result.

\subsection{Technical Challenges and Our Approach} \label{sec:overview}

A major technical challenge in the sketching model emanates from the assumption that every edge is shared between two nodes.
In particular, this ``vertex partitioning'' property of the inputs prevents us from leveraging direct reductions from problems in communication complexity, which is a common lower bound technique in the setting where the edges are partitioned between the players, cf.\ \cite{woodruff2017distributed}. 
The existing lower bounds of \cite{NY19-soda,Y-soda21} circumvent this obstacle by implementing a ``lossy'' simulation of the sketching algorithm in the 2-party model of communication complexity.
The purpose of the simulation is to construct a solver for variants of the universal relation (UR) problem~\cite{karchmer1995super,kapralov2017optimal} in the one-way communication model, where Alice gets a subset $S$ of some universe and sends a single message to Bob, whose input consists of some proper subset $T \subset S$. Upon receiving Alice's message, Bob must output some element in $S  \setminus T$.\footnote{Strictly speaking,  \cite{Y-soda21} considers a decision variant of the UR problem. Here we focus our discussion on the search variant to keep the presentation simple.}
The UR problem is a classic find-the-needle-in-the-haystack problem, which makes it a natural candidate for proving a lower bound for the amount of communication required to ensure that a node can successfully identify a crucial spanning tree edge among many adjacent edges.  
 
To address the technical challenge of the input edges being shared between their neighbors, the simulation of \cite{NY19-soda,Y-soda21} introduces an error term (in addition to the error of the algorithm), by giving up on simulating the messages sent by a certain subset of nodes. Nevertheless, they show via Pinsker's inequality~\cite{pinsker1964information} that the resulting distribution observed by Bob is statistically close enough to the distribution at the referee in the sketching model.  
Since the error term of their simulation is roughly $\frac{1}{{n}^{\epsilon}}$, for some constant $\epsilon>0$, this approach does not provide better bounds for algorithms in the tiny error regime, including deterministic algorithms.

While our graph construction is similar to the ones in \cite{NY19-soda,Y-soda21}, our overall argument is very different, as we do not attempt to achieve a reduction from a communication complexity problem.
In the graph, there are $M$ center nodes, and each center $c_i$ has some ``noisy'' edges to its \emph{private neighbors}, which are not adjacent to any other center nodes. In addition, there are disjoint sets $R_0$ and $R_1$ of so-called  \emph{shared neighbors}, and $c_i$ is connected to a subset $D_i$ that lies entirely in either $R_0$ or $R_1$.
That is, for our hard distribution, we first create a partition of the ID space into disjoint sets $P_1,\dots,P_M,R_0,R_1$, all of which have the same size $q$, and then we sample the corresponding private neighbors $T_i  \subseteq P_i$ and shared neighbors $D_i$ for each center node $c_i$. 
To determine whether we choose $D_i$ such that $D_i \subseteq R_0$ or $D_i  \subseteq R_1$, we sample an $M$-length binary vector $X$.

We can think of $(P_i,R_0,R_1)$ as being an \emph{ordered balanced split} (OBS) of the possible ID space of $c_i$'s neighbors. 
Note that, after choosing the actual neighbors of $c_i$ from these subsets, $c_i$'s view consists of the IDs in $D_i \cup T_i$ without knowing which ID lies in which set.
That is, the message sent to the referee is determined by applying a function to the entire set $D_i \cup T_i$.
We call an OBS \emph{safe} if for any given set of private neighbors $T_i$, there is no way of constructing two inputs for $c_i$ on which it sends the same message to the referee, where we choose the subset of shared neighbors to lie in $R_0$ in the first input and in $R_1$ in the second input.
We then bound the $3$-ary Vapnik–Chervonenkis (VC) dimension~\cite{karpovsky1978coordinate,mossel2002complexity} of the safe OBSs of $c_i$.
This enables us to use a theorem of \cite{karpovsky1978coordinate} (also known as generalized Sauer bound) to obtain an upper bound on the size of any such family of safe OBSs.
By using a combinatorial argument on multinomial coefficients, it follows that the probability of sampling a safe OBS for a center node $c_i$ is exponentially small in $q$. 

Subsequently, assuming that indeed all center node OBSs are unsafe, we derive a lower bound on the probability of actually sampling such ``dangerous'' triples $(T_i, D_i^0, D_i^1)$ for every center $c_i$, which serve as witnesses of the assumed non-safety of the OBSs. 

We show that the correct spanning tree crucially depends on the values of the sampled binary vector $X$. 
However, conditioned on all sampled triples being dangerous, it is impossible for a center node to convey the difference to the referee, since each $c_i$ must send the same message for the graph constructed with $X_i=0$ as it does for the graph for $X_i=1$.
Of course, the shared neighbors of $c_i$ do have a different view (due to having distinct sets of center neighbors) in these two cases and could, at least in principle, alert the referee of the difference. 
However, their number is polynomially smaller than the number of center nodes. 
Thus, for the spanning tree construction problem, we can show that this imbalance prevents the algorithm from computing the correct answer unless the message length is sufficiently large. 
For the connectivity problem, we use a standard encoder-decoder information theoretic argument to obtain the result.

\subsection{Additional Related Work}  \label{sec:related}

The first work to consider the distributed graph sketching model was by Becker, Matamala, Nisse, Rapaport, Suchan, and
Todinca in~\cite{becker2011adding}, who proved the hardness of several graph problems for deterministic sketching algorithms, including diameter testing and subgraph detection.
They obtain their results by proving that the existence of an efficient protocol for these problems allows one to reconstruct the entire graph, which, of course is impossible without a sufficiently large bound on the message length. 
They explicitly point out that their approach does not extend to graph connectivity problems.

The graph sketching model was further studied by Becker, Montealegre, Rapaport, and Todinca in \cite{BMRT-sirocco14}, who show separations between the power of deterministic, private randomness and public randomness algorithms.
More recently, Assadi, Kol, and Oshman~\cite{AKO-podc20} showed a lower bound of $\Omega\lb(n^{1/2-\epsilon}\rb)$ bits on the required message length  for computing an MIS.
In the variant of the distributed sketching model where nodes have access to only \emph{private} randomness, Holm, King, Thorup, Zamir, and Zwick~\cite{HKTZZ-focs19} showed that computing a spanning tree is possible with sketches of $O(\sqrt{n}\log n)$ bits.

There is a known equivalence between the distributed graph sketching model and the single-round variant of the broadcast congested clique, as observed by \cite{jurdzinski2018communication,AKO-podc20}.
In the latter model, each one of $n$ nodes can broadcast a single message per round that is received by the other nodes at the end of the current round.
Pai and Pemmaraju~\cite{PP20-fsttcs} showed that $\Omega\lt( \frac{\log n}{b} \rt)$ rounds are required assuming that nodes can broadcast messages of length $b$ bits.
Montealegre and Todinca~\cite{MT16-podc} discovered that there is a deterministic $r$-round connectivity algorithm that sends messages of size $O(n^{1/r}\log n)$.
In subsequent work, Jurdzinski and Nowicki~\cite{JN-disc17} showed how to improve the round complexity to $O(\log n/\log \log n)$ when considering the standard assumption of $O(\log n)$ bits per message.

\subsection{Roadmap}

In Section~\ref{sec:obs}, we define ordered balanced splits (OBSs) of ID sets and prove a bound on their $3$-ary VC dimension.
We also prove an upper bound on the probability of obtaining a safe OBS.
In Section~\ref{sec:dist}, we define the hard input distribution as a sequence of four sampling steps, which is used to determine the node IDs and the adjacencies for the spanning tree and connectivity graph constructions, and state some of their crucial properties.
In Section~\ref{sec:dangerous}, we continue developing the combinatorial argument of Section~\ref{sec:obs}, by bounding the probability of sampling a so-called ``dangerous'' triple, which can be viewed as a witness for the non-safety of an OBS. Moreover, we also show how conditioning on the event of sampling only dangerous triples impacts the transcript of the shared neighbors (which we call right-side transcript).
Then, we show that the distributional error is sufficiently large for spanning tree construction in Section~\ref{sec:distST} and for connectivity testing in Section~\ref{sec:distconn}.
We complete the proof of our main result (Theorem~\ref{thm:main}) for randomized algorithms via Yao's lemma in Section~\ref{sec:random}.
Finally, we extend the results to $k$-edge connectivity in Section~\ref{sec:kedge}.
 
\section{Ordered Balanced Splits} \label{sec:obs}

Throughout this section, we consider two integer parameters $q$ and $B$. We assume that $q$ is sufficiently large and that 
\begin{align}
B \le  \frac{q}{100}. \label{eq:B}
\end{align}
Moreover, we consider $U$ to be a set of size $3q$ and a function $f : 2^U \to \mathcal{M}$, where $|\mathcal{M}| \le 2^B$.
Intuitively speaking, we can think of $U$ as a set of possible neighborhood IDs and $f$ as a node's function that maps a given subset of neighbors to a message from some alphabet $\mathcal{M}$.

\begin{definition} \label{def:obs}
An \emph{ordered balanced split (OBS) of $U$} is a partition of $U$ into subsets $(P,R_0,R_1)$, where $|P|=|R_0|=|R_1|=q$. 
\end{definition}

Note that we can think of an OBS of $U$ as a 3-coloring of $U$ with color palette $\textsf{Col}=\set{\textsf{P},\textsf{R}_0,\textsf{R}_1}$ (i.e., a map $U \to \textsf{Col}$), if we restrict all colors to occur equally often.
The following is an adaptation of Definition~2.1 of \cite{mossel2002complexity}:

\begin{definition}[Shattering and $3$-ary VC dimension] \label{def:shatter}
Let $\mathcal{H}$ be a family of ordered balanced splits of $U$,
i.e., $\mathcal{H}  \subseteq\set{ \phi: U \to \textsf{Col}}$.
We say that \emph{$\mathcal{H}$ shatters $Y$} if every possible $3$-coloring of $Y$ can be obtained by restricting the domain of some $\phi \in \mathcal{H}$ to $Y$.
The \emph{$3$-ary VC dimension of $\mathcal{H}$} is the size of the largest set that is shattered by $\mathcal{H}$. 
\end{definition}

The hard lower bound graph instances that we construct in Section~\ref{sec:dist} will leverage the fact that the referee is unable to compute the correct answer unless nodes distinguish certain input pairs by sending distinct messages.
This is captured by the following notion of ``safe'' behavior:

\begin{definition} \label{def:safe}
We say that an OBS $(P,R_0,R_1)$ is \emph{safe for $f$} if, for every $T  \subseteq P$, and every pair of nonempty sets $D_0 \subseteq R_0$ and $D_1 \subseteq R_1$, it holds that
\begin{align}
	f(T \cup D_0) \ne f(T \cup D_1).
\end{align}
\end{definition}

Let $\mathcal{H}_f$ be the family of all safe OBS for the given function $f$.

\begin{lemma} \label{lem:shatters_safe}
If $\mathcal{H}_f$ shatters a set $Y  \subseteq U$, then, for every pair of distinct subsets $A,A'  \subseteq Y$ of size $\lt\lfloor |Y| / 2 \rt\rfloor$, we have $f(A) \ne f(A')$.
\end{lemma}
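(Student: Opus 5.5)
Given distinct $A,A' \subseteq Y$ of size $m=\lfloor |Y|/2\rfloor$, the plan is to exhibit a $3$-coloring of $Y$ such that any safe OBS $\phi\in\mathcal{H}_f$ realizing it on $Y$ forces $f(A)\ne f(A')$ through Definition~\ref{def:safe}. Since $\mathcal{H}_f$ shatters $Y$, such a $\phi=(P,R_0,R_1)$ exists.

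Set $T=A\cap A'$, $D_0=A\setminus A'$ and $D_1=A'\setminus A$. Because $A\neq A'$ and $|A|=|A'|$, neither set contains the other, so $D_0$ and $D_1$ are both nonempty, and they are disjoint. We then have $A=T\cup D_0$ and $A'=T\cup D_1$. Color $Y$ as follows: every element of $T$ gets $\textsf{P}$, every element of $D_0$ gets $\textsf{R}_0$, every element of $D_1$ gets $\textsf{R}_1$, and the remaining elements of $Y\setminus(A\cup A')$ get an arbitrary color, say $\textsf{P}$.

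By shattering, some $\phi=(P,R_0,R_1)\in\mathcal{H}_f$ restricts to exactly this coloring on $Y$. Hence $T\subseteq P$, $D_0\subseteq R_0$ and $D_1\subseteq R_1$, with $D_0,D_1$ nonempty. Safety of $\phi$ for $f$ then gives $f(T\cup D_0)\ne f(T\cup D_1)$, that is, $f(A)\neq f(A')$.

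There is no real obstacle here. The only points to check are that $D_0$ and $D_1$ are nonempty, which uses equal cardinality and distinctness, and that shattering is stated over all $3$-colorings of $Y$, not only balanced ones. The latter holds by Definition~\ref{def:shatter}, even though each $\phi$ is balanced on the whole of $U$. The size $\lfloor |Y|/2\rfloor$ plays no role beyond guaranteeing $|A|=|A'|$. Presumably it is chosen so that $\binom{|Y|}{\lfloor |Y|/2\rfloor}$ distinct messages are forced, which bounds $|Y|$ in terms of $B$.
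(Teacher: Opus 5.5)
Your proof is correct and matches the paper's argument: the same coloring of $Y$ ($A\cap A'$ and $Y\setminus(A\cup A')$ to $\textsf{P}$, $A\setminus A'$ to $\textsf{R}_0$, $A'\setminus A$ to $\textsf{R}_1$), shattering to get a safe OBS that extends it, and safety applied with $T=A\cap A'$, $D_0=A\setminus A'$, $D_1=A'\setminus A$. The paper phrases this as a contradiction rather than directly. You also check explicitly that $D_0$ and $D_1$ are nonempty, which the safety definition requires and the paper leaves implicit.
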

\begin{proof}
Assume towards a contradiction that the statement is false, which means that there are $A, A' \subseteq Y$ such that $f(A) = f(A')$.
Consider a $3$-coloring $\phi$ of $Y$ such that 
\begin{align}
\phi(y) = 
\begin{cases}
{\textsf{R}_0} & y \in A  \setminus A'; \\
\textsf{R}_1 	   & y \in A'  \setminus A; \\
\textsf{P} 	& y \in A \cap A'; \\
\textsf{P} 	& y \in Y \setminus (A \cup A').\\
\end{cases}%
\end{align}
Since $\mathcal{H}_f$ shatters $Y$, there exists an OBS $(P,R_0,R_1)$ that extends $\phi$ to a (full) coloring of $U$.
Choose $T=A \cap A'$, $D_0=A  \setminus A'$, and $D_1=A'  \setminus A.$
By assumption, we get  
\begin{align}
	f(T \cup D_0) = f(A) = f(A') = f(T \cup D_1),
\end{align}
which shows that $(P,R_0,R_1)$ is not safe, thus providing a contradiction. 
\end{proof}

\begin{lemma} \label{lem:dim}
Recall that the range of function $f$ is of size $2^B$.
It holds that the $3$-ary VC dimension of $\mathcal{H}_f$ is at most $2B+3$.
\end{lemma}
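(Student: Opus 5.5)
We argue by contradiction via Lemma~\ref{lem:shatters_safe} and a counting argument. Suppose $\mathcal{H}_f$ shatters some set $Y \subseteq U$ with $|Y| = d \ge 2B+4$. By Lemma~\ref{lem:shatters_safe}, the function $f$ must be injective on the family $\binom{Y}{\lfloor d/2\rfloor}$ of subsets of $Y$ of size $\lfloor d/2 \rfloor$. Since the range $\mathcal{M}$ has at most $2^B$ elements, this forces
\begin{align}
\binom{d}{\lfloor d/2 \rfloor} \le 2^B. \notag
\end{align}

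Next we lower bound the central binomial coefficient. We use the standard estimate $\binom{d}{\lfloor d/2\rfloor} \ge \frac{2^d}{d+1}$, which holds because it is the largest of the $d+1$ terms summing to $2^d$. We then need to check that $\frac{2^d}{d+1} > 2^B$ whenever $d \ge 2B+4$. Indeed, $2^d/(d+1) \ge 2^{2B+4}/(2B+5)$ for such $d$, since the function $2^d/(d+1)$ is increasing for $d\ge 1$. Moreover $2^{2B+4}/(2B+5) > 2^B$ is equivalent to $2^{B+4} > 2B+5$, which holds for all $B \ge 0$. This contradicts the displayed inequality, so no shattered set has size exceeding $2B+3$.

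The only subtle point is making sure Lemma~\ref{lem:shatters_safe} actually applies nontrivially. For $d \ge 2$, the family $\binom{Y}{\lfloor d/2\rfloor}$ has at least two distinct members, and the lemma requires $f(A)\ne f(A')$ for all distinct pairs, so the pigeonhole step is valid. The slack in the constant $2B+3$ makes the arithmetic routine, so we do not expect any real obstacle beyond choosing a clean binomial lower bound.
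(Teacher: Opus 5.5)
Your proof is correct and follows the paper's argument step for step. Lemma~\ref{lem:shatters_safe} gives injectivity of $f$ on the $\lfloor d/2\rfloor$-subsets of $Y$, hence $\binom{d}{\lfloor d/2\rfloor}\le 2^B$, and the bound $\binom{d}{\lfloor d/2\rfloor}\ge 2^d/(d+1)$ together with monotonicity rules out $d\ge 2B+4$; your explicit check that $2^{B+4}>2B+5$ supplies the one arithmetic step the paper leaves implicit.
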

\begin{proof}
Consider any $Y \subseteq U$ of size $d$ that is shattered by $\mathcal{H}_f$.
Recall from Lemma~\ref{lem:shatters_safe} that $f$ yields pairwise distinct values for all subsets of $Y$ that have size $\lt\lfloor d/2 \rt\rfloor$, which implies that
\begin{align}
	\binom{d}{\lt\lfloor d/2 \rt\rfloor} \le 2^B. \label{eq:B1}
\end{align}
In the remainder of the proof, we obtain an upper bound of $d$ in terms of $B$.
Since $\binom{d}{\lt\lfloor d/2 \rt\rfloor}$ is the largest binomial coefficient in the sum 
$\sum_{t=0}^{d}{d \choose t} = 2^d$ and
considering that there are $(d+1)$ terms in the sum, it follows that
\begin{align}
\binom{d}{\lt\lfloor d/2 \rt\rfloor} &\ge \frac{2^d}{d+1}.\notag 
\end{align} 
Note that the right-hand side is an increasing function in $d$.
Thus, if it was true that $d \ge 2B+4$, we would get
\begin{align}
\frac{2^d}{d+1} \ge \frac{2^{2B+4}}{2B+5} > 2^B,
\end{align}
contradicting \eqref{eq:B1}.
We conclude that $d \le 2B+3$.
\end{proof}

In the proof of Lemma~\ref{lem:Hsize} below, we make use of the following result of \cite{karpovsky1978coordinate} that we restate for completeness:
\begin{theorem}[\cite{karpovsky1978coordinate}] \label{thm:sauer}
Let $\mathcal{F}$ be a family of $3$-colorings of an $N$-element set and suppose that the $3$-ary VC dimension of $\mathcal{F}$ is at most $d$.
Then $|\mathcal{F}| \le \sum_{t=0}^{d} \binom{N}{t} 2^{N-t}$.
\end{theorem}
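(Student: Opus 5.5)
This is the generalized Sauer--Shelah bound for $3$-colorings (Karpovsky--Milman). I would prove it by induction on $N$, using the standard shifting/restriction argument adapted to a $3$-letter alphabet. The quantity to track is $\Phi(N,d)=\sum_{t=0}^{d}\binom{N}{t}2^{N-t}$. The goal is a recursion of the form $\Phi(N,d)=2\,\Phi(N-1,d)+\Phi(N-1,d-1)$. This identity I would first check directly, since $\binom{N}{t}2^{N-t}=2\binom{N-1}{t}2^{N-1-t}+\binom{N-1}{t-1}2^{N-t}$ follows from Pascal's rule. The base cases are $N=0$, where $|\mathcal{F}|\le 1$, and $d<0$ with $\mathcal F$ empty. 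Note also that $d\ge N$ gives $\Phi=3^N$, which is trivially an upper bound.

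For the induction step, I fix an element $x$ of the ground set $S$ and project every coloring to $S\setminus\{x\}$. For each projected coloring $\psi$, let $m(\psi)\in\{1,2,3\}$ be the number of colors $c$ such that $\psi$ extended by $x\mapsto c$ lies in $\mathcal F$. Then
\[
|\mathcal F|=\sum_\psi m(\psi)=|\mathcal F_1|+|\mathcal F_2|+|\mathcal F_3|,
\]
where $\mathcal F_j=\{\psi: m(\psi)\ge j\}$. The family $\mathcal F_1$ is the projection of $\mathcal F$, so it has $3$-ary VC dimension at most $d$. The same holds for $\mathcal F_2\subseteq\mathcal F_1$. The key claim is that $\mathcal F_3$ has VC dimension at most $d-1$. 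Indeed, if $\mathcal F_3$ shatters $Y\subseteq S\setminus\{x\}$, then every coloring of $Y$ extends, within $\mathcal F$, with all three colors at $x$, so $\mathcal F$ shatters $Y\cup\{x\}$. Induction then gives $|\mathcal F|\le 2\Phi(N-1,d)+\Phi(N-1,d-1)=\Phi(N,d)$.

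The main obstacle, and the step needing the most care, is the bound $|\mathcal F_2|\le\Phi(N-1,d)$. Naively this holds because $\mathcal F_2\subseteq\mathcal F_1$, which shows the splitting is too lossy in one place but still sufficient. The recursion only needs $|\mathcal F_1|,|\mathcal F_2|\le\Phi(N-1,d)$ and $|\mathcal F_3|\le\Phi(N-1,d-1)$, and all three follow from the induction hypothesis. So the argument closes, provided I double-check the recursion identity for $\Phi$ and the shattering-lift claim for $\mathcal F_3$.

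Finally, for our application to ordered balanced splits, the balanced constraint plays no role. The family $\mathcal{H}_f$ is simply a family of $3$-colorings of $U$ with $N=3q$, so the statement applies verbatim.
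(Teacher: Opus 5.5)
Your proof is correct. It differs from the paper only in that the paper gives no proof at all: Theorem~\ref{thm:sauer} is imported as a black box from Karpovsky and Milman and then applied with $N=3q$, $d=2B+3$ in Lemma~\ref{lem:Hsize}.

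You instead give the standard self-contained argument, the $3$-ary analogue of the inductive proof of the Sauer--Shelah lemma. You layer the projection by the number $m(\psi)$ of admissible colors at $x$. You observe that only the top layer $\mathcal{F}_3$ forces shattering of $Y\cup\{x\}$ and so loses a dimension. You then match this with the recursion $\Phi(N,d)=2\Phi(N-1,d)+\Phi(N-1,d-1)$, which you verify correctly via Pascal's rule.

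The base cases work as you state, provided the induction on $N$ is run for all $d\ge -1$ simultaneously. Use the convention that ``VC dimension at most $-1$'' means $\mathcal{F}=\emptyset$, since a nonempty family always shatters $\emptyset$. This is exactly the case you need when you invoke the hypothesis for $\mathcal{F}_3$ with $d=0$.

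Your ``main obstacle'' paragraph describes a non-issue. The coefficient $2$ on $\Phi(N-1,d)$ is precisely what $\mathcal{F}_1$ and $\mathcal{F}_2$ consume, and nothing is lost. In fact the bound is tight: take all colorings that use one fixed color on at most $d$ coordinates.

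Your route buys self-containment and an immediate generalization to $k$ colors, with $(k-1)^{N-t}$ in place of $2^{N-t}$; the paper's citation buys brevity. Your closing remark is also correct: balancedness of the OBSs is irrelevant, since $\mathcal{H}_f$ is simply a subfamily of all $3$-colorings of $U$.
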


\begin{lemma} \label{lem:Hsize}
It holds that $|\mathcal{H}_f| \le 2^{3q}(q+1)(192e)^{q/32}$.
\end{lemma}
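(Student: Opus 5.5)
We combine Lemma~\ref{lem:dim} with Theorem~\ref{thm:sauer}. Every element of $\mathcal{H}_f$ is a $3$-coloring of the $N=3q$-element set $U$. By Lemma~\ref{lem:dim}, the $3$-ary VC dimension of $\mathcal{H}_f$ is at most $d:=2B+3$. Theorem~\ref{thm:sauer} then gives
\begin{align}
|\mathcal{H}_f| \le \sum_{t=0}^{d}\binom{3q}{t}2^{3q-t} \le 2^{3q}\sum_{t=0}^{d}\binom{3q}{t}. \notag
\end{align}
It therefore suffices to show that $\sum_{t=0}^{d}\binom{3q}{t}\le (q+1)(192e)^{q/32}$.

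By \eqref{eq:B}, and since $q$ is large, we have $d=2B+3\le q/50+3\le q/32$. This is well below $3q/2$, so the binomial coefficients $\binom{3q}{t}$ increase in $t$ on the range $0\le t\le d$. Hence every term is at most $\binom{3q}{\lfloor q/32\rfloor}$. The sum has at most $d+1\le q+1$ terms, so
\begin{align}
\sum_{t=0}^{d}\binom{3q}{t}\le (q+1)\binom{3q}{\lfloor q/32\rfloor}. \notag
\end{align}
(One could instead drop the factor $2^{-t}$ less crudely, but it is not needed.)

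For the remaining factor we use the standard estimate $\binom{n}{k}\le (en/k)^k$ with $n=3q$ and $k=\lfloor q/32\rfloor$. The function $k\mapsto (en/k)^k$ is increasing for $k\le n$. We may therefore replace $k$ by $q/32$, obtaining
\begin{align}
\binom{3q}{\lfloor q/32\rfloor}\le \left(\frac{3qe}{q/32}\right)^{q/32} = (96e)^{q/32}\le (192e)^{q/32}. \notag
\end{align}
The claimed constant $192e$ leaves slack: it also covers the variant where one bounds by $(2en/k)^k$, or where $d$ is compared against $q/64$ inside a factor $2$. Multiplying the factors gives $|\mathcal{H}_f|\le 2^{3q}(q+1)(192e)^{q/32}$.

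The main point requiring care is the middle step. There we must check that the index range $t\le 2B+3$ stays in the increasing regime of the binomials, and that $2B+3\le q/32$ under assumption \eqref{eq:B} for large $q$. The rest is routine arithmetic with the constants.
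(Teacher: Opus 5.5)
Your proposal is correct and follows essentially the same route as the paper: Sauer's bound with $d\le 2B+3\le\lfloor q/32\rfloor$, then monotonicity of the binomials, at most $q+1$ terms, and $\binom{n}{k}\le(en/k)^k$. The only difference is the last estimate, and it is minor: you use monotonicity of $k\mapsto(en/k)^k$ to get the sharper $(96e)^{q/32}$, while the paper bounds the base by $192e$ via $\lfloor q/32\rfloor\ge q/64$ and then the exponent by $q/32$.
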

\begin{proof}
Let $c_0=\lt\lfloor q/32 \rt\rfloor$.
Combining Theorem~\ref{thm:sauer} with the upper bound on the dimension derived in Lemma~\ref{lem:dim} yields
\begin{align}
|\mathcal{H}_f| 
\le \sum_{t=0}^{2B+3} {3q \choose t} 2^{3q-t} 
\le  2^{3q} \sum_{t=0}^{2B+3} {3q \choose t}  
\le  2^{3q} \sum_{t=0}^{c_0} {3q \choose t},
\end{align} 
where we have used the fact that $2B+3 \le c_0$, for sufficiently large $q$ in the last inequality.
Since the binomial coefficients ${3q \choose t}$ are increasing for $0 \le t \le c_0$, it follows that
\begin{align}
|\mathcal{H}_f| 
\le 2^{3q} \sum_{t=0}^{c_0} {3q \choose c_0}
&= 2^{3q} (c_0+1) {3q \choose c_0} \notag\\ 
\ann{since $c_0 \le q$}
&\le 2^{3q} (q+1) {3q \choose c_0} \notag\\ 
&\le 2^{3q} (q+1) \lt(\frac{3eq}{c_0}\rt)^{c_0}. \notag\\ 
\end{align} 
As long as $q\ge 64$, it holds that $c_0 \ge \frac{q}{64}$, and thus
\begin{align}
|\mathcal{H}_f|
&\le 2^{3q}(q+1) \lt(192e\rt)^{c_0} 
\le
2^{3q}(q+1)(192e)^{q/32},
  \notag
\end{align}
where the last inequality follows from $c_0\le q/32$.
\end{proof}

\begin{lemma} \label{lem:OBS_safe}
Let $(P,R_0,R_1)$ be a uniformly at random sampled OBS of $U$.
It holds that $(P,R_0,R_1)$ is safe with probability at most $2^{-q}$.
\end{lemma}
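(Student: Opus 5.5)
The probability is a ratio: the number of safe OBSs divided by the total number of OBSs. A uniformly random OBS of $U$ is uniform over all ordered partitions of $U$ into three labeled parts of size $q$, so
\[
\Pr\bigl[(P,R_0,R_1)\text{ is safe}\bigr] = \frac{|\mathcal{H}_f|}{\binom{3q}{q,q,q}}.
\]
I will bound the numerator by Lemma~\ref{lem:Hsize}, which gives $|\mathcal{H}_f| \le 2^{3q}(q+1)(192e)^{q/32}$. It then remains to lower bound the multinomial coefficient $\binom{3q}{q,q,q}$ well enough that the ratio is at most $2^{-q}$.

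For the denominator I use that $\binom{3q}{q,q,q}$ is the largest multinomial coefficient in the expansion of $(1+1+1)^{3q}=3^{3q}$. The number of compositions of $3q$ into three nonnegative parts is $\binom{3q+2}{2}\le (3q+2)^2$, so
\[
\binom{3q}{q,q,q}\ge \frac{3^{3q}}{(3q+2)^2}.
\]
This mirrors the averaging argument in the proof of Lemma~\ref{lem:dim}. Alternatively, Stirling's formula gives $\binom{3q}{q,q,q} = \Theta(3^{3q}/q)$. Combining the two bounds yields
\[
\Pr[\text{safe}] \le (q+1)(3q+2)^2 \left(\frac{2}{3}\right)^{3q} (192e)^{q/32} = \poly(q)\cdot \left(\frac{8}{27}\cdot (192e)^{1/32}\right)^{q}.
\]

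The main point to verify is that the base of this exponential is strictly below $1/2$, so that the polynomial factor can be absorbed for large $q$. Here $192e \approx 521.9 < 2^{10}$, so $(192e)^{1/32} < 2^{10/32} \approx 1.242$. Hence the base is at most about $0.2963 \cdot 1.242 \approx 0.368 < 1/2$. The probability is therefore at most $\poly(q)\cdot 0.368^{q} \le 2^{-q}$ once $q$ is sufficiently large, which holds by the standing assumption that $q$ is sufficiently large.

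This comparison of constants is the only delicate step, and it has comfortable slack. If the crude polynomial lower bound on the multinomial coefficient ever felt too loose, I would switch to the Stirling estimate, but the bound $(3q+2)^2$ already suffices.
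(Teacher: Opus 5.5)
Your proposal is correct and follows essentially the same route as the paper. Both bound $|\mathcal{H}_f|$ via Lemma~\ref{lem:Hsize}, lower-bound $\binom{3q}{q,q,q}$ by averaging $3^{3q}$ over the $\binom{3q+2}{2}$ multinomial terms (the paper uses $(3q+1)^2$ where you use $(3q+2)^2$), and check $\frac{8}{27}(192e)^{1/32}<\frac12$ so the polynomial factor is absorbed. The one step you only assert is that the balanced coefficient is the largest; the paper justifies it by noting $\binom{3q}{a-1,b+1,c}/\binom{3q}{a,b,c} = a/(b+1) > 1$ whenever $a \ge b+2$.
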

\begin{proof}
Since Lemma~\ref{lem:Hsize} already provides an upper bound on the size of $\mathcal{H}_f$, it suffices to show that the total number of OBSs of $U$ is sufficiently large.
We start by observing that the total number of OBSs corresponds to the multinomial
\[
{3q \choose q,q,q},
\]
which we will lower-bound next.
The Multinomial Theorem (e.g., see Sec.~6.5 in \cite{rosen2019discrete}) tells us that
\begin{align}
 \sum_{a+b+c=3q}{3q \choose a,b,c} = 3^{3q},  \label{eq:sum1}
\end{align}
where the left-hand side is the sum over the multinomial coefficients with three nonnegative parts that add up to $3q$.

By a straightforward application of the ``stars and bars'' counting method (see, e.g., Chap.~6 in \cite{rosen2019discrete}), the total number of triples $(a,b,c)$ with the property that $a+b+c=3q$ is 
\begin{align}
{3q+2 \choose 2} = \frac{(3q+1)(3q+2)}{2} < (3q+1)^2,
\label{eq:triples}
\end{align}
where the inequality follows from the fact that $3q+2<2(3q+1)$.
 
Consider two multinomial coefficients ${3q \choose a,b,c}$ and ${3q \choose a-1,b+1,c}$, and suppose that two coordinates differ by at least $2$; w.l.o.g., $a \ge b+2$.
Since 
\begin{align}
\frac{{3q \choose a-1,b+1,c}}{{3q \choose a,b,c}} = \frac{a}{b+1} > 1,
\end{align}
it follows that making a multinomial more balanced increases its size.
Thus, ${3q \choose q,q,q}$ is the largest term in $\sum_{a+b+c=3q}{3q \choose a,b,c}$, and hence is at least as large as the average. 
Recalling \eqref{eq:sum1} and \eqref{eq:triples}, this implies 
\begin{align}
{3q \choose q,q,q}
\ge
\frac{3^{3q}}{(3q+1)^2}.
\end{align}
It follows that
\begin{align}
\frac{|\mathcal{H}_f|}{{3q \choose q,q,q}} 
&\le (q+1)(3q+1)^2 \gamma^q,
\end{align}
where $\gamma = \frac{8(192e)^{1/32}}{27} < \frac{1}{2}$.
In particular, there exists some $\eta>0$ such that $\gamma = 2^{-1 - 2\eta}$.
Since $(q+1)(3q+1)^2 = 2^{o(q)}$, for sufficiently large $q$, it follows that $(q+1)(3q+1)^2 \le 2^{\eta q}$.
We conclude that 
\begin{align}
\frac{|\mathcal{H}_f|}{{3q \choose q,q,q}} 
\le 2^{-(1+\eta)q} \le 2^{-q}.\notag 
\end{align}
\end{proof}

\section{The Hard Input Graph Distribution} \label{sec:dist}

We now describe how we sample the lower bound graph.
Note that our graph construction is similar to the constructions pioneered by \cite{Y-soda21,NY19-soda}, and further explored in \cite{robinson2023distributed}.

For the given parameter $q$, let 
\begin{align}
	M = q^2. \label{eq:M}
\end{align}
We describe the distribution as four separate sampling steps, as we will need to condition on the information revealed in the individual sub-steps below.
\begin{itemize} 
\item First, we assign the integers in the set $W = [(M+2)q]$ into randomly sampled subsets as follows:
    \begin{itemize} 
    \item \textbf{Step 1A:} Choose an ordered partition $P_1 \dcup \cdots \dcup P_M \dcup R_0 \dcup R_1$ of $W$ uniformly at random, under the restriction that all of these subsets have size $q$.\footnote{Note that $\dcup$ denotes the disjoint union operator.} We say that $U_i=P_i \cup R_0 \cup R_1$ is the \emph{support of center $c_i$}.
    \item \textbf{Step 1B:} Then, for every $i \in [M]$, independently and uniformly choose
    $T_i \subseteq P_i$ and nonempty sets $D_i^0  \subseteq R_0$ and $D_i^1  \subseteq R_1$.
    \end{itemize} 
\item \textbf{Step 2:}
Sample independently and uniformly a binary vector $X = (X_1,\dots,X_M)$.
\item \textbf{Step 3:}
Sample an independent and uniform index $J \in [M]$.
\end{itemize}

Equipped with the above random variables, we are ready to describe the graph constructions $\Gconn$ and $\GST$ that we use for connectivity and spanning tree construction, respectively.
It will turn out that both graphs are nearly identical except for two specific edges.

\begin{itemize} 
\item \textbf{Vertices of $\Gconn$ and $\GST$:} The vertices consist of the \emph{center vertices} $C=\set{c_1,\dots,c_M}$, the \emph{helper vertices} $S=\set{s_1,\dots,s_M}$, two \emph{hub vertices} $h_0$ and $h_1$, and the vertices in the set $W$ defined above.\footnote{By a slight abuse of notation, we use the same variables for the vertex sets as well as for their IDs.}
The IDs of the vertices in $C \cup S \cup \set{h_0,h_1}$ are fixed in advance and independent of the random variables sampled above, which means that they can be easily identified by their neighbors.
The number of described vertices is
\begin{align}
N_0 = M\cdot q + 2M+2q+2=q^3 +2q^2 +2q +2. \label{eq:N0}
\end{align}
In addition, the graph also contains $n - N_0$ many \emph{padding vertices}.
\item \textbf{Edges common to $\Gconn$ and $\GST$:}
\begin{align*}
  &\{c_i,s_i\} &&(i\in[M]),\\
  &\{c_i,p\}   &&(i\in[M],\ p\in T_i),\\
  &\{h_0,p\}   &&(i\in[M],\ p\in P_i\setminus T_i),\\
  &\{c_i,r\}   &&(i\in[M],\ r\in D_i^{X_i}),\\
  &\{h_0,r\}   &&(r\in R_0),\\
  &\{h_1,r\}   &&(r\in R_1).
\end{align*}
Note that every padding vertex is attached as a leaf to the hub $h_0$. 
\item \textbf{Special Edge of $\GST$:}
We directly connect the two hubs by adding $\set{h_0,h_1}$.
\item \textbf{Special Edge of $\Gconn$:}
We use the value of the random index $J \in [M]$ by adding the edge $\set{h_0,s_J}$.
\end{itemize}
We use $\Dconn$ to denote the distribution obtained by sampling the corresponding random variables and constructing the graph $\Gconn$.
Similarly, we use $\DST$ to denote the distribution for constructing $\GST$.

We conclude this section by showing some crucial properties of these graphs:

\begin{lemma} \label{lem:GST}
The following properties hold for a graph $\GST$ sampled from $\DST$:
\begin{compactenum} 
\item[(A)] Graph $\GST$ is connected.
\item[(B)] Consider two distinct binary vectors $X,X' \in \set{0,1}^M$. Then, any spanning tree of ${G}_{X}^{\text{ST}}$ contains an edge between a center node and $R_0 \cup R_1$ that does not exist in ${G}_{X'}^{\text{ST}}$. Consequently, ${G}_{X}^{\text{ST}}$ and ${G}_{X'}^{\text{ST}}$ do not have a common spanning tree.
\end{compactenum}
\end{lemma}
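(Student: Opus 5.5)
For (A), I will exhibit an explicit path from every vertex to the hub $h_0$ in $\GST$. Padding vertices, all of $R_0$, and all of $P_i\setminus T_i$ are adjacent to $h_0$ by construction. The hub $h_1$ reaches $h_0$ through the special edge $\{h_0,h_1\}$, and every $r\in R_1$ is adjacent to $h_1$. A center $c_i$ has at least one neighbor in $D_i^{X_i}$, because $D_i^0,D_i^1$ are sampled nonempty in Step~1B. That neighbor lies in $R_0\cup R_1$ and is therefore connected to $h_0$. Finally, each helper $s_i$ hangs off $c_i$, and each vertex $p\in T_i$ is adjacent to $c_i$. Hence every vertex lies in the component of $h_0$.

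For (B), I will first pin down the structure around each center. Since $X\neq X'$, fix an index $i$ with $X_i\neq X'_i$. The key observation is that $s_i$ is a leaf whose only neighbor is $c_i$. The edge $\{c_i,s_i\}$ therefore lies in every spanning tree, but it does not help connect $c_i$ to the rest of the graph. I will then examine $N(c_i)$, which equals $\{s_i\}\cup T_i\cup D_i^{X_i}$. Every $p\in T_i$ has degree one, since $p\in P_i$ is adjacent only to $c_i$ (it is not adjacent to $h_0$ because $p\notin P_i\setminus T_i$). Consequently, the edges from $c_i$ to $T_i\cup\{s_i\}$ only join leaves to $c_i$. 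If we delete from the graph all edges between $c_i$ and $D_i^{X_i}$, the set $\{c_i,s_i\}\cup T_i$ becomes a separate component.

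From this I will conclude that any spanning tree $\mathcal{T}$ of $G^{\text{ST}}_X$ must contain some edge $\{c_i,r\}$ with $r\in D_i^{X_i}$. Now $D_i^{X_i}\subseteq R_{X_i}$, while in $G^{\text{ST}}_{X'}$ the only edges from $c_i$ into $R_0\cup R_1$ go to $D_i^{X'_i}\subseteq R_{X'_i}$. These lie in a different set, since $R_0\cap R_1=\emptyset$. So $\{c_i,r\}$ is not an edge of $G^{\text{ST}}_{X'}$. Here I am assuming both graphs are built from the same Step~1 samples, which is the intended reading. The ``consequently'' part follows immediately: $\mathcal{T}$ contains an edge absent from $G^{\text{ST}}_{X'}$, so it cannot be a subgraph of that graph, and in particular not one of its spanning trees.

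I expect no real obstacle here. The only step needing care is verifying that vertices in $T_i$ are genuinely leaves. This requires that $h_0$ is joined only to $P_i\setminus T_i$, that the special edge of $\GST$ touches only the hubs, and that the $P_i$ are pairwise disjoint, so no other center reaches $T_i$.
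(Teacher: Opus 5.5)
Your proposal is correct and follows essentially the same route as the paper: for (A) you route every vertex to a hub, and for (B) you use the cut around $\{c_i,s_i\}\cup T_i$ for an index $i$ with $X_i\neq X'_i$, whose only crossing edges go from $c_i$ to $D_i^{X_i}\subseteq R_{X_i}$, a set disjoint from $R_{X'_i}$. Your extra check that $s_i$ and the vertices of $T_i$ are leaves is exactly what justifies the paper's (unproved) claim that these are the only edges crossing the cut.
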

\begin{proof}
For (A), recall that the two hub vertices are connected by the edge $\set{h_0,h_1}$.
Every node in $P_i \setminus  T_i$ is adjacent to $h_0$, whereas every node in $R_a$ is connected to $h_a$ ($a \in \set{0,1}$). 
Moreover, every node in $T_i$ is connected to center $c_i$, as is the helper node $s_i$.
Finally, every $c_i$ has at least one neighbor in $R_0\cup R_1$ and thus has a path to either of the hubs.

We now prove (B). 
Suppose that $X_i \ne X_i'$, for some $i \in [M]$, and let $C_i=\set{c_i,s_i} \cup T_i$.
The only edges crossing the cut $(C_i,V(\GST)\setminus C_i)$ in $\GST$ are the ones between $c_i$ and nodes in $D_i^{X_i}  \subseteq R_{X_i}$. 
Thus, any given spanning tree of $\GST$ must contain an edge $e \in (C_i,V(\GST)\setminus C_i)$.
Similarly, the cut $(C_i,V({G}_{X'}^{\text{ST}})\setminus C_i)$ in ${G}_{X'}^{\text{ST}}$, only contains edges connecting $c_i$ to some neighbors in $D_i^{X_i'}  \subseteq R_{X_i'}$.
As above, any spanning tree contains an edge $e' \in (C_i,V({G}_{X'}^{\text{ST}})\setminus C_i)$.
Recall that $R_{X_i} \cap R_{X_i'} = \emptyset$, which implies that $e \notin {G}_{X'}^{\text{ST}}$. 
\end{proof}

\begin{lemma} \label{lem:Gconn}
The graph $\Gconn$ is connected if and only if $X_J=1$.
\end{lemma}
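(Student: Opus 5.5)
Our plan is to analyze the component structure of $\Gconn$ directly. It consists of the edges common to both constructions plus the special edge $\set{h_0,s_J}$, and it has no edge $\set{h_0,h_1}$. We first identify the component $K_0$ containing $h_0$. By construction it contains:
\begin{compactitem}
\item all padding vertices;
\item every vertex in $P_i\setminus T_i$, for all $i$;
\item every vertex in $R_0$.
\end{compactitem}
Every center $c_i$ with $X_i=0$ joins $K_0$ through some $r\in D_i^0\subseteq R_0$; we may use this because $D_i^0$ is nonempty by Step 1B. Then $s_i$ and $T_i$ join via $c_i$. Finally, the special edge puts $s_J$, and hence $c_J$ and $T_J$, into $K_0$.

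Similarly, we define $K_1$ as the component containing $h_1$. It contains $R_1$ and every block $C_i=\set{c_i,s_i}\cup T_i$ with $X_i=1$, again using that $D_i^1$ is nonempty. Every vertex lies in $K_0\cup K_1$:
\begin{compactitem}
\item hubs, padding vertices, $P_i\setminus T_i$, and $R_0\cup R_1$ are handled explicitly above;
\item each block $C_i$ attaches to $R_{X_i}$ via $c_i$.
\end{compactitem}
Hence $\Gconn$ is connected if and only if $K_0=K_1$, i.e., if and only if some edge joins the two sets.

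For the direction ``$X_J=1$ implies connected'', we observe that $c_J$ is adjacent to some $r\in D_J^{1}\subseteq R_1$, so $c_J\in K_1$. Moreover, $c_J$ is connected to $h_0$ via $s_J$ and the special edge. Hence $h_0$ and $h_1$ lie in the same component, and the covering argument above gives connectivity.

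For the converse, assume $X_J=0$; we show there is no path from $h_1$ to $h_0$. We will check that the vertex set
\[
V_1=\set{h_1}\cup R_1\cup\bigcup_{i:X_i=1} C_i
\]
is closed under adjacency by going through the edge list:
\begin{compactitem}
\item Vertices of $R_1$ are adjacent only to $h_1$ and to centers $c_i$ with $X_i=1$, since $c_i$ connects to $D_i^{X_i}$.
\item Vertices of $T_i$ are adjacent only to $c_i$.
\item $s_i$ is adjacent to $c_i$, and additionally to $h_0$ only when $i=J$; since $X_J=0$, $J$ is not among the indices in $V_1$.
\item $c_i$ with $X_i=1$ is adjacent only to $s_i$, $T_i$, and $D_i^1\subseteq R_1$.
\end{compactitem}
Thus $V_1$ is a union of components containing $h_1$ but not $h_0$, so the graph is disconnected.

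The only delicate point is this exhaustive check that no other edge leaves $V_1$. In particular, the padding vertices and $P_i\setminus T_i$ attach only to $h_0$, and $h_0$'s neighbors in $R_0$ are disjoint from $R_1$ because the sets of Step 1A partition $W$. The check is routine but must cover every edge type in the list.
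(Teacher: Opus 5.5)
Your proof is correct and follows essentially the same route as the paper. Every vertex reaches one of the two hubs, the path $h_0 - s_J - c_J - r - h_1$ with $r \in D_J^1$ settles $X_J=1$, and for $X_J=0$ your closed set $V_1$ is exactly the paper's set $B$ with an empty cut; your vertex-by-vertex closure check is more explicit than the paper's bare assertion, though it omits the one-line case that $h_1$'s only neighbors in $\Gconn$ are the vertices of $R_1$ (there is no edge $\set{h_0,h_1}$).
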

\begin{proof}
As in the proof of Lemma~\ref{lem:GST}, observe that
every node in $P_i \setminus  T_i$ is adjacent to $h_0$, whereas every node in $R_a$ is connected to $h_a$ ($a \in \set{0,1}$). 
Moreover, every node in $T_i$ is connected to center $c_i$, as is the helper node $s_i$.
Every $c_i$ has at least one neighbor in $R_0\cup R_1$ and thus has a path to either of the hubs.
In other words, every node has a path to one of the two hub nodes,
however, in contrast to $\GST$, there is no edge $\set{h_0,h_1}$ in $\Gconn$.

Now suppose that $X_J=1$ and recall that the edge $\set{h_0,s_J} \in E(\Gconn)$. 
Note that $D_J^1  \subseteq R_1$ is nonempty.
Thus, there exists a path $h_0 - s_J - c_J - r - h_1$, for some node $r \in R_1$, and thus $\Gconn$ is connected.
Finally, consider the case $X_J=0$. 
Let $\textsf{Pad}$ be the padding nodes (if any), and define
the sets 
\begin{align}
A &= \set{h_0} \cup R_0 \cup \bigcup_{i=1}^M (P_i \setminus  T_i) \cup \textsf{Pad} \cup \bigcup_{i:X_i=0}(\set{c_i,s_i} \cup T_i),\notag\\ 
B &= \set{h_1} \cup R_1 \cup \bigcup_{i:X_i=1}(\set{c_i,s_i} \cup T_i).
\end{align}
Observe that $A$ and $B$ are disjoint and partition the set of nodes.
According to the construction of $\Gconn$, the cut $(A,B)$ is empty, i.e., the graph is disconnected. 
\end{proof}
 
\section{Unsafe and Dangerous Splits} \label{sec:dangerous}

For each center node $c_i$ and each $A \subseteq W$, let $f_{i}(A)$ be the message sent by the assumed deterministic algorithm when center node $c_i$ has the neighborhood $A \cup \set{s_i}$.
That is, for the given list of neighbors, $f_{i}$ produces a message from an alphabet $\mathcal{M}$ of size at most $2^B$.
The alert reader may remark that $c_i$'s neighbors are chosen from $W \cup \set{s_i}$; however, we omit $s_i$ from the domain of $f_i$, because its ID is fixed and independent of the input distribution. 

\begin{lemma} \label{lem:all_unsafe}
Suppose we sample the partition of the IDs in $W$ according to Step~1A in Section~\ref{sec:dist}, and note that we can think of each triple $(P_i,R_0,R_1)$ as an OBS of the set of possible neighbors of $c_i$ in $W$.
Let $\allunsafe$ denote the event that, for every center $c_i$, $(P_i,R_0,R_1)$ is unsafe (see Def.~\ref{def:safe}).
\begin{align}
\Pr_{\text{\rm Step~1A}}\lt[  \allunsafe \rt] \ge \frac{1}{2}.\notag 
\end{align}
\end{lemma}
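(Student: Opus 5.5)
The plan is a union bound over the $M$ centers, with Lemma~\ref{lem:OBS_safe} supplying the per-center estimate. Fix a center $c_i$ and its message function $f_i$. We need the marginal distribution of the triple $(P_i,R_0,R_1)$ under Step~1A.

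First, I will argue that the support $U_i = P_i \cup R_0 \cup R_1$ is a uniformly random $3q$-subset of $W$. Conditioned on $U_i$, the triple $(P_i,R_0,R_1)$ is a uniformly random OBS of $U_i$. Both facts follow from symmetry: Step~1A picks an ordered partition of $W$ uniformly among all ordered partitions into $M+2$ blocks of size $q$. For each fixed OBS $(P,R_0,R_1)$ of a fixed $3q$-set $U$, the number of ordered partitions extending it equals the number of ordered partitions of $W\setminus U$ into $M-1$ blocks of size $q$. This count does not depend on $(P,R_0,R_1)$ or on $U$.

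Second, I condition on $U_i = U$ for an arbitrary fixed $U$. I apply Lemma~\ref{lem:OBS_safe} with the function $f = f_i$ restricted to $2^U$, which still has a range of size at most $2^B$. The lemma requires $q$ sufficiently large and $B \le q/100$, as in \eqref{eq:B}. It gives that $(P_i,R_0,R_1)$ is safe for $f_i$ with probability at most $2^{-q}$. Averaging over $U$ then shows that the unconditional probability that $c_i$'s OBS is safe is also at most $2^{-q}$.

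Third, a union bound over the $M=q^2$ centers (by \eqref{eq:M}) gives
\begin{align}
\Pr\lt[ \neg\allunsafe \rt] \le q^2 \, 2^{-q} \le \frac{1}{2} \notag
\end{align}
for $q$ sufficiently large, which is exactly the claim.

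The only step requiring care is the marginal-uniformity argument. Safety of the OBS depends on $f_i$, which is defined on subsets of all of $W$, so I must make sure that restricting $f_i$ to $2^{U}$ is legitimate in Lemma~\ref{lem:OBS_safe}. It is: the lemma's definitions only refer to subsets of $U$.

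I will also state the standing assumptions explicitly, since the lemma's statement leaves them implicit. These are $B \le q/100$ and $q$ large enough. In the later parameter setting this holds because $q$ is polynomially related to $B$ being small, and otherwise the theorem's bound is trivial.
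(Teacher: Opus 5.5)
Your proposal is correct and follows the paper's proof. Like the paper, you condition on the support $U_i$, note that $(P_i,R_0,R_1)$ is then a uniform OBS of $U_i$, apply Lemma~\ref{lem:OBS_safe} to bound the per-center probability of a safe OBS by $2^{-q}$, and take a union bound over the $M=q^2$ centers. Your counting argument for the marginal uniformity and your remark about restricting $f_i$ to $2^{U_i}$ spell out two points that the paper asserts without justification.
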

\begin{proof}
Consider a center $c_i$ and condition on the support  being the set $U_i=P_i \cup R_0 \cup R_1$, which has size $3q$.
According to the hard input distribution, the triple $(P_i,R_0,R_1)$  is uniform among all possible OBSs of $U_i$.
By Lemma~\ref{lem:OBS_safe}, it follows that $(P_i,R_0,R_1)$ is safe for $f_i$ with probability at most $2^{-q}$. A union bound over the $M=q^2$ centers shows that the event that there exists a center for which the sampled OBS is safe happens with probability at most $q^2 \cdot 2^{-q} \le \frac{1}{2}$, for sufficiently large $q$.
\end{proof}

\begin{definition} \label{def:dangerous}
Consider an unsafe OBS $(P_i,R_0,R_1)$ for $f_i$.
We say that the triple $(T_i, D_i^0, D_i^1)$ is \emph{dangerous} if 
\begin{align}
	f_i(T_i \cup D_i^0) = 	f_i(T_i \cup D_i^1).
\end{align}
\end{definition}

Intuitively speaking, we can think of a dangerous triple as being a ``witness'' to the fact that $(P_i,R_0,R_1)$ is unsafe.
We use $\allbad$ to denote the event that every sampled triple is dangerous, and derive a lower bound on its probability next: 

\begin{lemma}  \label{lem:all_bad}
Condition on sampling the partition of $W$ according to Step~1A of the input distribution in Section~\ref{sec:dist} and suppose that the OBS of every center node is unsafe.
That is,
\begin{align}
 \Pr_{\text{\rm Step 1B}}\lt[\ \allbad \ \md|\ \allunsafe \ \rt] \ge 2^{-3q^3}.  \notag 
\end{align}
\end{lemma}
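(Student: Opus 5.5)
Conditioned on the outcome of Step~1A, the triples $(T_i,D_i^0,D_i^1)$ for $i\in[M]$ are sampled independently in Step~1B. Each is uniform over $2^q\cdot(2^q-1)^2$ choices, which is fewer than $2^{3q}$. Hence it suffices to show that, for each fixed center $c_i$ whose OBS $(P_i,R_0,R_1)$ is unsafe for $f_i$, at least one triple is dangerous. Then $\Pr[(T_i,D_i^0,D_i^1)\text{ dangerous}]\ge 2^{-3q}$, and multiplying over the $M=q^2$ centers gives $\Pr[\allbad\mid \text{Step 1A}]\ge (2^{-3q})^{q^2}=2^{-3q^3}$.

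\textbf{Existence of a dangerous triple.} Fix $i$ and suppose $(P_i,R_0,R_1)$ is unsafe. By Definition~\ref{def:safe}, there exist $T\subseteq P_i$ and nonempty sets $D_0\subseteq R_0$ and $D_1\subseteq R_1$ with $f_i(T\cup D_0)=f_i(T\cup D_1)$. These are exactly the constraints of Step~1B ($T_i\subseteq P_i$ arbitrary, $D_i^0,D_i^1$ nonempty subsets of $R_0,R_1$). So $(T,D_0,D_1)$ is in the support of the Step~1B distribution, and by Definition~\ref{def:dangerous} it is dangerous. Its probability is exactly $\frac{1}{2^q(2^q-1)^2}\ge 2^{-3q}$.

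\textbf{Independence and removing the conditioning.} Since Step~1B samples the triples independently across $i$ given the partition, $\Pr[\allbad\mid\text{partition}]=\prod_i \Pr[\text{triple } i \text{ dangerous}\mid \text{partition}]\ge 2^{-3qM}$ for every partition in $\allunsafe$. Here we use that each $f_i$ is a fixed function, because the algorithm is deterministic and $s_i$'s ID is fixed. Averaging this pointwise bound over partitions in $\allunsafe$ yields the stated conditional bound.

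There is no real obstacle. The only point needing care is to check that the witness from the definition of unsafety matches the Step~1B support, in particular that the nonemptiness requirements coincide, and to note that the triples are independent given the partition.
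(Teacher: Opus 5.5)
Your proposal is correct and follows essentially the same argument as the paper: unsafety guarantees at least one dangerous triple among the $2^q(2^q-1)^2\le 2^{3q}$ equally likely candidates for each center, and conditional independence of the triples across the $M=q^2$ centers gives the bound $2^{-3q^3}$. Your explicit check that the unsafety witness lies in the Step~1B support, including the matching nonemptiness constraints, spells out a step the paper leaves implicit.
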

\begin{proof}
Given $(P_i,R_0,R_1)$, the number of candidate triples of a center node is
$2^{q}(2^q-1)^2$, since there are $2^q$ choices for $T_i$ and $2^{q}-1$ choices each for the (nonempty) subsets $D_i^0$ and $D_i^1$.
Moreover, each of the 
\begin{align}
2^{q}(2^q-1)^2 \le 2^{3q}
\end{align} 
candidate triples is equally likely, according to the sampling performed in Step~1B of Section~\ref{sec:dist}.
Thus, we choose a dangerous triple for $c_i$ with probability at least $2^{-3q}$.
Recalling that the choices for distinct center nodes are conditionally independent, it follows that
we sample dangerous triples for all of the $M$ centers with probability at least $2^{-3qM}=2^{-3q^3}$.
\end{proof}

\begin{definition}[Right-side Transcript] \label{def:rside}
Let $N(r)$ denote the IDs of the neighbors of a node $r$ and let $f_r$ be the (deterministic) message function used by the algorithm at $r$.
We define the \emph{right-side transcript} to be the sequence
\begin{align}
	Z = (f_r(N(r)))_{r \in R_0 \cup R_1},
\end{align}
where we fix the order of the nodes in $R_0 \cup R_1$ according to their IDs.
\end{definition}

\begin{lemma} \label{lem:allsame}
Condition on events $\allbad$, $\allunsafe$ and the sampling in Steps~1A and 1B.
Then, for any two distinct binary vectors $X,X' \in \set{0,1}^M$, it holds that every center node sends the same message for graphs $G_X^{\text{ST}}$ and $G_{X'}^{\text{ST}}$. 
Moreover, the same statement also holds for $\Gconn$ and $G_{X',J}^{\text{conn}}$.
\end{lemma}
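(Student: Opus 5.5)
The plan is to compute each center node's neighborhood explicitly in both graphs and check that it reduces to an application of $f_i$ to the same set up to the $X_i$-dependent piece. Then I invoke the dangerous-triple condition.

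Fix the outcomes of Steps~1A and~1B, and assume $\allunsafe$ and $\allbad$ hold. Fix any $i\in[M]$. By the edge list in Section~\ref{sec:dist}, the neighborhood of $c_i$ in $G_X^{\text{ST}}$ is exactly
\begin{align*}
N(c_i) = \set{s_i} \cup T_i \cup D_i^{X_i}.
\end{align*}
The edge $\set{c_i,s_i}$ is always present, and $c_i$ is adjacent to the vertices of $T_i$ and of $D_i^{X_i}$. No other edge of the common edge set is incident to $c_i$. The hub edges, the padding edges, and the special edge $\set{h_0,h_1}$ do not touch $c_i$. The same description holds in $\Gconn$, because the special edge $\set{h_0,s_J}$ is incident to $s_J$ and $h_0$, not to any center. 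Since $T_i$ and $D_i^0, D_i^1$ depend only on Steps~1A and~1B, the neighborhood of $c_i$ depends on $X$ only through $X_i$, and it does not depend on $J$ at all.

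By the definition of $f_i$, the message of $c_i$ is therefore $f_i(T_i \cup D_i^{X_i})$ in either graph family. If $X_i = X'_i$, the two neighborhoods coincide and so do the messages. If $X_i \neq X'_i$, then $\set{X_i,X'_i}=\set{0,1}$, and the messages are $f_i(T_i\cup D_i^0)$ and $f_i(T_i\cup D_i^1)$ in some order. Under $\allbad$ the triple $(T_i,D_i^0,D_i^1)$ is dangerous (Definition~\ref{def:dangerous}), so these two values are equal. This gives the claim for every center in $G_X^{\text{ST}}$ versus $G_{X'}^{\text{ST}}$, and likewise for $\Gconn$ versus $G_{X',J}^{\text{conn}}$ (with the same $J$, although $J$ is in fact irrelevant to centers).

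The only point needing care, and the place I expect an obstacle, is verifying that the message function really has only the neighbor IDs as input. The algorithm is deterministic, with randomness already fixed for the distributional argument. The IDs of $c_i$ and $s_i$ are fixed, so $f_i$ is well defined on subsets of $W$ as in the paper. I also need to confirm that the construction adds no other $X$-dependent edge at $c_i$, which the explicit edge list shows. The condition $\allunsafe$ is needed only so that "dangerous" is defined as in Definition~\ref{def:dangerous}.
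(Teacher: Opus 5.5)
Your proposal is correct and follows essentially the same route as the paper. In both $G_X^{\text{ST}}$ and $\Gconn$, the neighborhood of $c_i$ is $\set{s_i}\cup T_i\cup D_i^{X_i}$, independent of $J$, so its message is $f_i(T_i\cup D_i^{X_i})$, and this value is the same for $X_i=0$ and $X_i=1$ because the triple is dangerous. Your explicit enumeration of $N(c_i)$ and your case split on whether $X_i=X'_i$ spell out details that the paper's shorter proof leaves implicit.
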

\begin{proof}
Under the given conditioning, we have already performed the sampling of Steps~1A and 1B, thus we only need to consider the possible impact of the choices made in Steps~2 and 3. 
According to the construction of $\GST$ (see Sec.~\ref{sec:dist}), the neighborhood of a center node $c_i$ is independent of sampling the index $J$ in Step~3.
Moreover, $c_i$ is connected to the same helper node $s_i$ in all graphs sampled from $\DST$.
Thus, it follows from Definition~\ref{def:dangerous}, that $c_i$ sends the same message no matter what vector $X$ is sampled in Step~2.

To complete the proof, note that all properties used in the proof also hold for the graph construction $\Gconn$.
\end{proof}

\begin{lemma} \label{lem:rside}
Condition on the partition and all triples obtained in Steps~1A and 1B of the sampling process in Section~\ref{sec:dist}, and suppose that events $\allbad$ and $\allunsafe$ occur.
\begin{compactenum} 
\item[(A)] The only messages depending on the binary vector $X$ (sampled in Step~2) are part of the right-side transcript $Z$, which itself is fully determined by $X$ under the given conditioning, i.e., $Z=Z(X)$.
\item[(B)] There are at most $2^{2qB}$ many choices for $Z$.
\end{compactenum} 
\end{lemma}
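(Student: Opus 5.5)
We prove (A) by going through the vertex classes of $\GST$ (and of $\Gconn$) one at a time, under the stated conditioning. Steps~1A and 1B are then fixed. Only the vector $X$ from Step~2 (and, for $\Gconn$, the index $J$ from Step~3) is still random.

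For each class we ask whether its neighborhood depends on $X$.
\begin{itemize}
\item \emph{Centers $c_i$.} By Lemma~\ref{lem:allsame}, each $c_i$ sends the same message for every $X$, because $(T_i,D_i^0,D_i^1)$ is dangerous under $\allbad$.
\item \emph{Helpers $s_i$.} The neighborhood of $s_i$ is $\set{c_i}$, together with $h_0$ if $i=J$ in $\Gconn$. This does not depend on $X$.
\item \emph{Nodes $p\in P_i$.} Such a node is adjacent to $c_i$ if $p\in T_i$ and to $h_0$ otherwise. Both are fixed by Step~1B.
\item \emph{Padding vertices.} Each is a leaf attached to $h_0$, so its neighborhood is fixed.
\item \emph{Hubs.} The neighborhood of $h_0$ is $R_0$, the sets $P_i\setminus T_i$, the padding vertices, and $h_1$ or $s_J$. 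The neighborhood of $h_1$ is $R_1$, plus $h_0$ in $\GST$. Neither depends on $X$.
\item \emph{Shared neighbors $r\in R_a$.} The neighborhood of $r$ is $\set{h_a}\cup\set{c_i : X_i=a,\ r\in D_i^a}$, which does depend on $X$.
\end{itemize}
So the only messages that can vary with $X$ are those sent by nodes of $R_0\cup R_1$, i.e.\ the entries of $Z$. Each entry $f_r(N(r))$ is a deterministic function of $N(r)$. Given the conditioning, $N(r)$ is determined by $X$ as shown above, so $Z=Z(X)$.

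For (B), $Z$ has one entry per node of $R_0\cup R_1$, which is $|R_0|+|R_1|=2q$ entries. Each entry is a message from an alphabet of size at most $2^B$. Hence the number of possible values of $Z$ is at most $(2^B)^{2q}=2^{2qB}$.

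The only step that needs care is the hub case in (A). We have to confirm that $\set{h_0,s_J}$ depends only on $J$ and not on $X$, so $h_0$ is not affected by $X$. Apart from that, the argument is bookkeeping against the edge list in Section~\ref{sec:dist}.
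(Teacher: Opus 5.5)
Your proposal is correct and follows the same approach as the paper: for (A) it shows that only the centers and the nodes of $R_0\cup R_1$ have neighborhoods that depend on $X$, and uses Lemma~\ref{lem:allsame} to rule out the centers; for (B) it counts $2q$ messages of $B$ bits each. Your vertex-by-vertex check, including that $\set{h_0,s_J}$ depends only on $J$, just spells out what the paper calls ``immediate from the construction.''
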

\begin{proof} 
Part (B) follows by observing that there are $2q$ vertices in $R_0 \cup R_1$ and each of them sends a message of at most $B$ bits.

We now focus on Part~(A).
It is immediate from the construction of both, $\GST$ and $\Gconn$, that only the messages sent by the center nodes and the nodes in $R_0 \cup R_1$ may depend on $X$. 
Furthermore, the conditioning on $\allbad$ guarantees that all center nodes send the same message for every choice of $X$, as shown in Lemma~\ref{lem:allsame}, and hence the only nodes that can distinguish $X$ from $X'$ lie in $R_0 \cup R_1$.
\end{proof}

\section{The Distributional Error of Spanning Tree Construction} \label{sec:distST}

Let $\fail$ denote the event that the referee outputs an incorrect answer.

\begin{lemma} \label{lem:ST_fails_X}
Condition on the partition and all triples obtained in Steps~1A and 1B, and suppose that events $\allbad$ and $\allunsafe$ occur.
Then,
\begin{align}
\Pr_X\lt[\ \fail \ \md|\ \allunsafe,\allbad \ \rt] \ge \frac{1}{2}.\notag 
\end{align}
\end{lemma}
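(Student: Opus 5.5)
Fix the partition from Step~1A and all triples from Step~1B, and assume that $\allunsafe$ and $\allbad$ occur. Under this conditioning the only remaining randomness relevant to $\GST$ is the uniform vector $X\in\set{0,1}^M$; the index $J$ does not affect $\GST$. The plan is a counting argument on the referee's view.

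By Lemma~\ref{lem:allsame}, every center sends the same message for all $X$. The helpers, the hubs, the nodes in $P_i$, and the padding nodes also have neighborhoods that do not depend on $X$. Hence, by Lemma~\ref{lem:rside}(A), the full collection of messages received by the referee is a deterministic function of $Z(X)$ alone. Since the algorithm is deterministic at this point (the randomized case is handled later via Yao), the referee's output is $g(Z(X))$ for some fixed function $g$.

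Next I count the vectors $X$ on which the referee can succeed. By Lemma~\ref{lem:GST}(B), distinct vectors $X\neq X'$ have no common spanning tree. So for each value $z$ of the right-side transcript, the output $g(z)$ is a correct spanning tree for at most one $X$ with $Z(X)=z$. By Lemma~\ref{lem:rside}(B), $Z$ takes at most $2^{2qB}$ values, so the number of vectors $X$ on which the referee is correct is at most $2^{2qB}$. Because $X$ is uniform over $2^M=2^{q^2}$ vectors, this gives
\[
\Pr_X\lt[\ \neg\fail \ \md|\ \allunsafe,\allbad\ \rt]\le 2^{2qB-q^2}.
\]
Using \eqref{eq:B}, namely $B\le q/100$, the exponent is at most $q^2/50-q^2<-1$ for sufficiently large $q$. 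Therefore the probability of failure is at least $1/2$.

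The step that needs the most care is justifying that the referee's entire input is a function of $Z(X)$ under the conditioning. I would verify this node class by node class, using Lemma~\ref{lem:allsame} for the centers. I would also confirm that conditioning on $\allbad$ does not bias $X$: the events $\allbad$ and $\allunsafe$ depend only on Steps~1A and~1B, so $X$ remains uniform under the conditioning.
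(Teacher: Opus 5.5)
Your proposal is correct and follows essentially the same argument as the paper, which also shows that the right-side transcript $Z$ is injective on the set of vectors $X$ where the referee succeeds (via Lemma~\ref{lem:GST}(B) and Lemma~\ref{lem:rside}(A)), bounds that set by $2^{2qB}$ using Lemma~\ref{lem:rside}(B), and concludes that the success probability is at most $2^{2qB-M}\le \frac{1}{2}$. Your fiber-by-fiber count is a rephrasing of that injectivity, and your remark that $X$ remains uniform under the conditioning makes explicit a point the paper uses only implicitly.
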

\begin{proof}
Let $\mathcal{S}$ be the set of binary vectors $X$ such that the algorithm correctly outputs a spanning tree for $\GST$.
As a first step, we show that the right-side transcript (see Def.~\ref{def:rside}) $Z(X)$ has distinct values for all $X \in \mathcal{S}$.
To this end, recall from Lemma~\ref{lem:GST}(B) that the algorithm must output distinct spanning trees for all distinct $X$ and $X'$.
By Lemma~\ref{lem:rside}(A), we know that only the messages in $Z$ depend on $X$.
It follows that $Z(X) \ne Z(X')$, as otherwise the referee would output the same spanning tree for $X$ and $X'$. 

Since $Z$ is injective on $\mathcal{S}$, Lemma~\ref{lem:rside}(B) tells us that 
\[
|\mathcal{S}| \le 2^{2qB}.
\]
Given that $X$ is uniform on $\set{0,1}^M$ and $M=q^2$, it follows that the algorithm succeeds with probability at most
\begin{align}
 2^{2qB - M} \le 2^{-49M/50} \le \frac{1}{2},
\end{align}
where we used the assumption that $q$ is sufficiently large.
\end{proof}

\begin{lemma} \label{lem:distST}
Consider a deterministic algorithm in which each node sends a $B$-bit message, where $q$ is any sufficiently large integer.
If $n \ge q^3 + 2q^2 +2q +2$ and $B \le \frac{q}{100}$, then
the distributional error of computing a spanning tree on $\DST$ is at least $\Omega\lt( 2^{-3q^3} \rt)$.
\end{lemma}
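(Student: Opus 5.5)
The plan is to chain the three conditional bounds already established via the law of total probability over the sampling steps. Since the algorithm is deterministic, the event $\fail$ is a function of the sampled random variables (the partition from Step~1A, the triples from Step~1B, and $X$ from Step~2; $J$ is irrelevant for $\DST$). Hence
\begin{align}
\Pr_{\DST}\lt[ \fail \rt] \ge \Pr\lt[ \allunsafe \rt]\cdot \Pr\lt[ \allbad \md| \allunsafe \rt] \cdot \Pr\lt[ \fail \md| \allunsafe,\allbad \rt]. \notag
\end{align}
I would justify this by first fixing the outcome of Step~1A, then the outcome of Step~1B, and then integrating over $X$, using the independence of the steps.

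First I will check that the hypotheses of the earlier lemmas hold. The bound $n \ge q^3+2q^2+2q+2 = N_0$ from \eqref{eq:N0} guarantees that the construction of $\GST$ fits into $n$ vertices, with the remaining vertices serving as padding. The bound $B \le q/100$ is exactly \eqref{eq:B}, which is what Lemma~\ref{lem:OBS_safe} needs, and therefore also Lemma~\ref{lem:all_unsafe}. Each center $c_i$ has the message function $f_i$ with range of size at most $2^B$, so Lemma~\ref{lem:all_unsafe} gives $\Pr_{\text{Step 1A}}[\allunsafe]\ge 1/2$.

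Next, condition on any fixed partition in $\allunsafe$. Lemma~\ref{lem:all_bad} gives $\Pr_{\text{Step 1B}}[\allbad]\ge 2^{-3q^3}$. This bound is uniform over partitions, since each center's OBS being unsafe means at least one dangerous triple exists among at most $2^{3q}$ equally likely ones. Then fix any outcome of Step~1B in $\allbad$. Because $X$ is sampled independently in Step~2, Lemma~\ref{lem:ST_fails_X} yields failure probability at least $1/2$ over $X$, and this also holds pointwise for each such conditioning. That lemma needs $2qB\le q^2/50$, which again follows from $B\le q/100$ and $M=q^2$.

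Multiplying gives $\Pr[\fail]\ge \frac14\cdot 2^{-3q^3}=\Omega(2^{-3q^3})$, as claimed. The only delicate point is making sure the conditional bounds hold pointwise for every fixed realization of the earlier steps, not merely on average, so that they can be multiplied. I would address this by stating each factor as a bound for an arbitrary fixed realization satisfying the conditioning events and then averaging. Everything else is bookkeeping.
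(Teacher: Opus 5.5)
Your proposal is correct and follows essentially the same route as the paper: it lower-bounds $\Pr[\fail]$ by the product $\Pr[\allunsafe]\cdot\Pr[\allbad \mid \allunsafe]\cdot\Pr[\fail \mid \allunsafe,\allbad]$. It then plugs in Lemmas~\ref{lem:all_unsafe}, \ref{lem:all_bad}, and \ref{lem:ST_fails_X} to obtain $\frac{1}{4}\cdot 2^{-3q^3}$, and your plan to apply each bound to every fixed realization of the earlier steps and then average is exactly the justification the paper gives through $\EE\left[\Pr_X[\,\cdot\,]\right]$.
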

\begin{proof} 
We have
\begin{align}
\Pr_{\DST}\lt[  \fail\ \rt] 
&\ge 
\Pr_{\DST}\lt[  \fail\ \md|\ \allbad,\allunsafe \rt] \cdot 
\Pr_{\DST}\lt[  \allbad\ \md|\ \allunsafe \rt] \cdot 
\Pr_{\DST}\lt[  \allunsafe\  \rt].\notag
\intertext{
	Conditioned on the partition and triples sampled in Steps~1A and 1B as well as on events $\allbad$, $\allunsafe$, the remaining random choices are the sampling of $X$ and $J$ in Steps~2 and 3, respectively, and $\GST$ is conditionally independent of $J$.  
  We know from Lemma~\ref{lem:GST}(B) that correct outputs for distinct vectors $X$ and $X'$ result in distinct spanning trees, and Lemma~\ref{lem:rside}(A) tells us that only the right-side transcript, i.e., the messages sent by nodes in $R_0 \cup R_1$ can depend on the binary vector at hand.
$\Pr_{\DST}\lt[  \fail\ \md|\ \allbad,\allunsafe \rt]
=\EE\lt[ \Pr_{X}\lt[  \fail\ \md|\ \allbad,\allunsafe \rt] \rt]
\ge \frac{1}{2}$. %
Plugging this bound into the right-hand side of the above inequality, we obtain
}
\Pr_{\DST}\lt[  \fail\ \rt] 
&\ge 
\frac{1}{2}\cdot 
\Pr_{\text{\rm Step 1B}}\lt[  \allbad\ \md|\ \allunsafe \rt] \cdot 
\Pr_{\text{\rm Step 1A}}\lt[  \allunsafe  \rt] \notag \\
\ann{by Lem.~\ref{lem:all_unsafe} and \ref{lem:all_bad}}
&\ge
\frac{1}{2} \cdot 2^{-3q^3} \cdot \frac{1}{2} \notag \\
&= \Omega\lt(  2^{-3q^3}  \rt).\notag 
\end{align}
\end{proof}

\section{The Distributional Error of Connectivity Testing} \label{sec:distconn}

For each $j \in [M]$, we create a \emph{deterministic decoder $g_j$} by hardwiring $J=j$ and all fixed messages that are not part of the right-side transcript. 
Thus, 
\begin{align}
g_j : \text{range(Z)} \to \set{0,1},
\end{align}
whereby the output of $g_j(Z(X))$ is the assumed graph sketching algorithm's decision regarding whether $\Gconn$ is connected. 
Recall from Lemma~\ref{lem:Gconn} that the correct answer is $X_J$.
We make use of the following standard random-access inequality, whose proof is similar to Fano's inequality and the concavity of the binary entropy function. We include a complete proof in Appendix~\ref{app:random_access} for completeness.

\newcommand{\lemRandomAccess}{
Let $X \in \set{0,1}^M$ be chosen uniformly at random.
Moreover, let $Z=Z(X)$ be a deterministic encoding and let $g_1,\dots,g_M$ be deterministic binary decoders.
If $J$ is uniform on $[M]$ and independent of $X$, then
\begin{align}
\log_2 |\text{range}(Z)| \ge M (1 - h_2(\epsilon)),\notag 
\end{align}
where $\epsilon = \Pr\lt[ g_J(Z) \ne X_J \rt] $ and $h_2$ is the binary entropy function.
}
\begin{lemma} \label{lem:random_access}
\lemRandomAccess
\end{lemma}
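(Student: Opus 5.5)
The plan is the standard entropy argument: combine the chain rule with subadditivity, apply Fano's inequality coordinate by coordinate, and finish with Jensen. Throughout, write $H(\cdot)$ for Shannon entropy in bits. For each $j\in[M]$ set $\epsilon_j=\Pr[g_j(Z)\ne X_j]$. Since $J$ is uniform and independent of $(X,Z)$, we have $\epsilon=\frac1M\sum_{j}\epsilon_j$.

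First, $Z$ takes at most $|\range(Z)|$ values, so $\log_2|\range(Z)|\ge H(Z)$. Because $Z$ is a deterministic function of $X$, $H(Z)=I(X;Z)=H(X)-H(X\mid Z)=M-H(X\mid Z)$, where $H(X)=M$ because $X$ is uniform. It therefore suffices to show
\begin{align}
H(X\mid Z)\le M\,h_2(\epsilon). \notag
\end{align}

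Second, by the chain rule and because conditioning reduces entropy, $H(X\mid Z)\le\sum_{j=1}^M H(X_j\mid Z)$. For each fixed $j$, $X_j$ is binary and $g_j(Z)$ is an estimator of $X_j$ computed from $Z$ with error probability $\epsilon_j$. Fano's inequality for a binary alphabet (the $\log(|\mathcal X|-1)$ term vanishes) then gives $H(X_j\mid Z)\le h_2(\epsilon_j)$.

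To keep the appendix self-contained, we can also verify this step directly. Let $E_j=\mathbf 1[g_j(Z)\ne X_j]$. Then
\begin{align}
H(X_j\mid Z)\le H(X_j,E_j\mid Z)=H(E_j\mid Z)+H(X_j\mid E_j,Z)=H(E_j\mid Z)\le H(E_j)=h_2(\epsilon_j). \notag
\end{align}
Here $H(X_j\mid E_j,Z)=0$ because $X_j$ is determined by $g_j(Z)$ together with $E_j$.

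Finally, since $h_2$ is concave on $[0,1]$, Jensen's inequality gives $\sum_j h_2(\epsilon_j)\le M\,h_2\bigl(\frac1M\sum_j\epsilon_j\bigr)=M\,h_2(\epsilon)$. Combining the steps yields $\log_2|\range(Z)|\ge M-M\,h_2(\epsilon)$, which is the claim.

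The argument has no real obstacle. The only point that needs care is the bookkeeping with $J$: the error $\epsilon$ is defined over the joint randomness of $X$ and $J$, so it has to be rewritten as the average of the $\epsilon_j$ using the independence of $J$. Each $\epsilon_j$ is taken only over $X$, which matters because Fano is applied separately to each $j$ with no randomness in $J$.
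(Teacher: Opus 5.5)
Your proof is correct and follows essentially the same route as the paper's appendix proof. Both use subadditivity $H(X\mid Z)\le\sum_j H(X_j\mid Z)$, the per-coordinate bound $H(X_j\mid Z)\le h_2(\epsilon_j)$ via the error indicator $E_j$, concavity of $h_2$, and the chain $M-H(X\mid Z)=I(X;Z)\le H(Z)\le\log_2|\text{range}(Z)|$; your explicit derivation of $\epsilon=\frac1M\sum_j\epsilon_j$ from the independence of $J$ spells out a step the paper uses without comment.
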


\begin{lemma} \label{lem:XJconn}
Condition on the partition and all triples obtained in Steps~1A and 1B, as well as on events $\allbad$ and $\allunsafe$.
Then,
\begin{align}
\Pr_{X,J}\lt[ \fail  \ \md|\ \allunsafe, \allbad \rt] \ge \frac{1}{4}.
\notag 
\end{align}
\end{lemma}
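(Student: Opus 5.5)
We argue by contradiction using Lemma~\ref{lem:random_access}. Fix the partition and triples from Steps~1A and 1B, and condition on $\allunsafe$ and $\allbad$. Steps~2 and~3 are independent of Steps~1A and~1B, so under this conditioning $X$ is still uniform on $\set{0,1}^M$ and $J$ is uniform on $[M]$ and independent of $X$.

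By Lemma~\ref{lem:rside}(A), applied to $\Gconn$, the only messages that depend on $X$ are those in the right-side transcript $Z=Z(X)$. The center messages are constant by Lemma~\ref{lem:allsame}. The helper, hub and padding messages depend at most on $J$: the edge $\set{h_0,s_J}$ changes only the neighborhoods of $h_0$ and $s_J$. Hence, once $J=j$ is fixed, the referee's output is exactly $g_j(Z(X))$, where $g_j$ is the deterministic decoder defined above with all non-$Z$ messages hardwired. By Lemma~\ref{lem:Gconn}, the correct answer is $X_J$. Therefore
\[
\epsilon := \Pr_{X,J}\lt[ \fail \md| \allunsafe,\allbad \rt] = \Pr\lt[ g_J(Z)\ne X_J \rt].
\]

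Lemma~\ref{lem:random_access} then gives $\log_2|\range(Z)| \ge M(1-h_2(\epsilon))$. Lemma~\ref{lem:rside}(B) gives $|\range(Z)|\le 2^{2qB}$. With $B\le q/100$ and $M=q^2$, this yields $2q^2/100 \ge q^2(1-h_2(\epsilon))$, that is, $h_2(\epsilon)\ge 49/50$. Now suppose $\epsilon<1/4$. We may assume $\epsilon\le 1/2$, since otherwise there is nothing to prove. On $[0,1/2]$ the function $h_2$ is increasing, so $h_2(\epsilon)<h_2(1/4)\approx 0.811<49/50$, a contradiction. Hence $\epsilon\ge 1/4$.

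The main point that needs care is that Lemma~\ref{lem:random_access} requires a single deterministic encoding $Z(X)$ that does not depend on $J$. This holds because the nodes in $R_0\cup R_1$ are adjacent only to centers and hubs, so their neighborhoods do not involve $J$. The $J$-dependence sits entirely in the messages of $h_0$ and $s_J$, and these are absorbed into the decoders $g_j$. A second check is that every message hardwired into $g_j$ is in fact independent of $X$. This holds for $h_0$, $h_1$, the helpers and the padding nodes, since their neighborhoods are determined by Step~1 and $J$ alone.
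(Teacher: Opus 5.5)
Your proposal is correct and follows the paper's argument: a contradiction obtained by playing the random-access inequality (Lemma~\ref{lem:random_access}) against the bound $\log_2|\range(Z)|\le 2qB\le M/50$ from Lemma~\ref{lem:rside}(B). The differences are cosmetic: you compare $h_2(\epsilon)\ge 49/50$ with $h_2(1/4)$, whereas the paper compares $M/8$ with $M/50$. Your explicit checks that $Z$ does not depend on $J$, and that the hardwired messages do not depend on $X$, spell out what the paper leaves implicit in its definition of the decoders $g_j$.
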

\begin{proof}
Assume towards a contradiction that $\Pr_{X,J}\lt[ \fail  \ \md|\ \allunsafe, \allbad \rt] < \frac{1}{4}.$
By Lemma~\ref{lem:random_access} and the fact that $h_2$ is strictly increasing on $[0,1/2]$, we have
\begin{align}
\log_2|\text{range}(Z)| > M(1 - h_2(1/4)) > \frac{M}{8}, \label{eq:r1}
\end{align}
where the last inequality follows because $h_2(1/4) < \frac{7}{8}$.
Recalling that $B \le q/100$ and $M = q^2$, and applying Lemma~\ref{lem:rside}(B) yields
\begin{align}
\log_2|\text{range}(Z)| \le 2qB \le \frac{M}{50},
\end{align}
contradicting \eqref{eq:r1}.
\end{proof}

\begin{lemma} \label{lem:distConn}
Consider a deterministic algorithm in which each node sends a $B$-bit message, where $q$ is any sufficiently large integer.
If $n \ge q^3 + 2q^2 +2q +2$ and $B \le \frac{q}{100}$, then
the distributional error of deciding connectivity on graphs sampled from $\Dconn$ is at least $\Omega\lt( 2^{-3q^3} \rt)$.
\end{lemma}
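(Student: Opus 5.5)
The plan mirrors the proof of Lemma~\ref{lem:distST}. We lower-bound the failure probability by restricting to the intersection of the two good events and factor it along the sampling steps:
\begin{align}
\Pr_{\Dconn}\lt[ \fail \rt]
\ge
\Pr_{\Dconn}\lt[ \fail \md| \allbad,\allunsafe \rt]\cdot
\Pr_{\text{\rm Step 1B}}\lt[ \allbad \md| \allunsafe \rt]\cdot
\Pr_{\text{\rm Step 1A}}\lt[ \allunsafe \rt].\notag
\end{align}
The second and third factors are bounded exactly as before. Lemma~\ref{lem:all_unsafe} gives at least $\frac12$ for the third factor, and Lemma~\ref{lem:all_bad} gives at least $2^{-3q^3}$ for the second. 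Here we use that Steps~1A and 1B are identical for $\Dconn$ and $\DST$, so these lemmas apply verbatim.

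For the first factor, we fix any outcome of Steps~1A and~1B that is consistent with $\allunsafe$ and $\allbad$. The remaining randomness is then $X$ and $J$, which are independent and uniform. By Lemma~\ref{lem:allsame}, the center messages do not depend on $X$. The helper nodes $s_i$ have fixed neighborhoods, except that $s_J$ additionally sees $h_0$, so their messages depend only on $J$. The same holds for $h_0$, whose neighborhood is fixed apart from $s_J$. The remaining vertices ($h_1$, the vertices in $P_i$, and the padding vertices) have neighborhoods independent of $X$. This requires a quick check: $h_1$ is adjacent to all of $R_1$ regardless of $X$, and the $P_i$ neighborhoods depend only on $T_i$. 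Hence, by Lemma~\ref{lem:rside}(A), the referee's output is $g_J(Z(X))$ for the hardwired decoders $g_j$ defined above. By Lemma~\ref{lem:Gconn}, the correct answer is $X_J$. Lemma~\ref{lem:XJconn} then gives a conditional failure probability of at least $\frac14$ for every such fixed outcome. Averaging over the conditioned outcomes of Steps~1A/1B keeps the bound $\frac14$.

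Multiplying the three factors gives $\frac14\cdot 2^{-3q^3}\cdot\frac12=\Omega\lt( 2^{-3q^3} \rt)$. The hypotheses $n\ge N_0$ from \eqref{eq:N0} and $B\le q/100$ are needed only to make the construction well defined (padding vertices exist) and to invoke Lemmas~\ref{lem:OBS_safe} and~\ref{lem:XJconn}.

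The main obstacle is the justification that, under the conditioning, the messages outside $R_0\cup R_1$ depend only on $J$ and not on $X$. This is what allows hardwiring them into $g_j$, so that Lemma~\ref{lem:random_access} applies with an encoding $Z$ depending on $X$ alone. I would verify this node class by node class as above.
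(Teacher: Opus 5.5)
Your proposal is correct and follows the paper's proof essentially step for step: you use the same three-factor decomposition, bound the last two factors with Lemmas~\ref{lem:all_unsafe} and~\ref{lem:all_bad}, and bound the first by $\frac14$ via Lemma~\ref{lem:XJconn}, averaged over the conditioned outcomes of Steps~1A/1B. Your node-by-node check that, under the conditioning, every message outside $R_0\cup R_1$ depends only on $J$ is correct, and it usefully spells out what the paper leaves implicit in Lemma~\ref{lem:rside}(A) and in its definition of the hardwired decoders $g_j$.
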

\begin{proof} 
The proof is similar to Lemma~\ref{lem:distST} with the key difference being that we apply Lemma~\ref{lem:XJconn} instead of Lemma~\ref{lem:ST_fails_X}.

We have
\begin{align}
\Pr_{\Dconn}\lt[  \fail\ \rt] 
&\ge 
\Pr_{\Dconn}\lt[  \fail\ \md|\ \allbad,\allunsafe \rt] \cdot 
\Pr_{\Dconn}\lt[  \allbad\ \md|\ \allunsafe \rt] \cdot 
\Pr_{\Dconn}\lt[  \allunsafe\  \rt].\notag
\intertext{
	Conditioned on the partition and triples sampled in Steps~1A and 1B as well as on events $\allbad$, $\allunsafe$, the remaining random choices are the sampling of $X$ and $J$ in Steps~2 and 3.  
Thus, by Lemma~\ref{lem:XJconn},
$\Pr_{\Dconn}\lt[  \fail\ \md|\ \allbad,\allunsafe \rt]
=\EE\lt[ \Pr_{X,J}\lt[  \fail\ \md|\ \allbad,\allunsafe \rt] \rt]
\ge \frac{1}{4}$. %
Plugging this bound into the right-hand side of the above inequality, we obtain
}
\Pr_{\Dconn}\lt[  \fail\ \rt] 
&\ge 
\frac{1}{4}\cdot 
\Pr_{\text{\rm Step 1B}}\lt[  \allbad\ \md|\ \allunsafe \rt] \cdot 
\Pr_{\text{\rm Step 1A}}\lt[  \allunsafe  \rt] \notag \\
\ann{by Lem.~\ref{lem:all_unsafe} and \ref{lem:all_bad}}
&\ge
\frac{1}{4} \cdot 2^{-3q^3} \cdot \frac{1}{2} \notag \\
&= \Omega\lt(  2^{-3q^3}  \rt).\notag 
\end{align}
\end{proof}

\section{Completing the Proof of Theorem~\ref{thm:main}} \label{sec:random}

\begin{reptheorem}{thm:main}
\thmMain
\end{reptheorem}
\begin{proof} 
Given a randomized algorithm with a worst case message length of a certain number of bits, it is straightforward to modify the algorithm such that every message has length exactly $B$ without changing the asymptotic size of the message alphabet.
Thus, we assume that nodes always send $B$-bit messages throughout the rest of the proof, which means that the size of the message alphabet is at most $2^B$.

We first consider the connectivity problem.
Let $n' = \min\Set{n,\log_2\frac{1}{\delta}}$, and choose 
\begin{align}
	q = \lt\lfloor \lt( \frac{n'}{64} \rt)^{1/3} \rt\rfloor,
\end{align}
which implies 
\begin{align}
  2^{-3q^3} \ge 2^{-3n'/64}. \label{eq:i1}
\end{align}
For sufficiently large $n'$, it follows from \eqref{eq:N0} that 
\begin{align}
N_0 \le \frac{n'}{16} \le n,\notag 
\end{align}
which means that the hard distributions $\Dconn$ and $\DST$ (see Sec.~\ref{sec:dist}) are supported on $n$-node graphs, possibly by including $n-N_0$ padding vertices. 

Now, assume towards a contradiction that $B \le \frac{q}{100}$.
Combining the distributional error guaranteed by Lemma~\ref{lem:distConn}  with Yao's minimax lemma~\cite{DBLP:conf/focs/Yao77}, reveals that  
\begin{align}
\Pr\lt[ \fail \rt] &\ge \Omega\lt( 2^{-3q^3} \rt) \notag\\
\ann{by \eqref{eq:i1}}
&\ge \Omega\lt( 2^{-{3n'}/{64}} \rt) 
\ge \Omega\lt( \delta^{3/64} \rt) = \omega\lt( \delta \rt),
\end{align}
contradicting the assumed error probability being at most $\delta$.
It follows that $B = \Omega\lt( q \rt)$, which completes the proof for connectivity.
The proof of the result for spanning tree construction is analogous: We combine Lemma~\ref{lem:distST} (instead of Lemma~\ref{lem:distConn}) with Yao's lemma and note that the graph $\GST$ is guaranteed to be connected by Lemma~\ref{lem:GST}.
\end{proof}

\section{Extension to $k$-Edge Connectivity} \label{sec:kedge}

In this section, we prove Corollary~\ref{cor:kedge} via a reduction from the connectivity lower bound obtained in Theorem~\ref{thm:main}.
To prove a lower bound for $k$-edge connectivity on an $n$-node graph, we define $H$ to be a graph of 
\begin{align}
	m = \lt\lfloor \frac{n}{k+1} \rt\rfloor 
\end{align}
nodes $u_1,\dots,u_{m}$.
A straightforward counting argument shows that there is a partition of $[n] = V_1 \cup .\dots \cup V_m$ such that 
\begin{align}
k+1 \le |V_i| \le 2(k+1) \qquad (i \in [m]). \label{eq:csize}
\end{align}

\begin{definition} \label{def:blowup}
We define the \emph{blow-up} $B_k(H)$ on $[n]$ as follows:
 \begin{compactitem} 
\item We replace each $u_i$ with a clique among the vertices in $V_i$, and call $V_i$ a \emph{cluster} in $B_k(H)$.
\item For every edge $\set{u_i,u_j} \in E(H)$, we add a complete bipartite graph on the vertex sets $V_i$ and $V_j$. 
\end{compactitem}
\end{definition}

\begin{lemma} \label{lem:blowup_conn}
$H$ is connected if and only if $B_k(H)$ is $k$-edge connected.
\end{lemma}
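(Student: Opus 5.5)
We prove both directions separately, using only the structure of the blow-up: each cluster $V_i$ is a clique on at least $k+1$ vertices, and adjacent clusters are joined by a complete bipartite graph. We assume $m \ge 1$, i.e., $n \ge k+1$.

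\emph{($\Leftarrow$)} Suppose $H$ is disconnected. Let $\emptyset \ne Q \subsetneq V(H)$ be a connected component of $H$, and set $S = \bigcup_{u_i \in Q} V_i$. Then $S$ is a nontrivial subset of $[n]$. By Definition~\ref{def:blowup}, every edge of $B_k(H)$ lies either inside a cluster or between clusters $V_i, V_j$ with $\set{u_i,u_j} \in E(H)$. Since no $H$-edge leaves $Q$, the cut $(S,[n]\setminus S)$ is empty. Hence $B_k(H)$ is not even connected, let alone $k$-edge connected (for $k \ge 1$).

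\emph{($\Rightarrow$)} Suppose $H$ is connected, and let $\emptyset \ne S \subsetneq [n]$ be an arbitrary nontrivial cut. We distinguish two cases.

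\begin{compactitem}
\item \textbf{Some cluster is split.} Suppose some $V_i$ satisfies $V_i \cap S \ne \emptyset$ and $V_i \setminus S \ne \emptyset$, with $|V_i \cap S| = a$ and $|V_i| = s \ge k+1$ by \eqref{eq:csize}. The clique on $V_i$ alone contributes $a(s-a)$ crossing edges. For $1 \le a \le s-1$ this is at least $s-1 \ge k$.
\item \textbf{No cluster is split.} Then $S$ is a union of whole clusters, say $S = \bigcup_{u_i \in Q} V_i$ for some $\emptyset \ne Q \subsetneq V(H)$. Since $H$ is connected, there is an edge $\set{u_i,u_j}$ with $u_i \in Q$ and $u_j \notin Q$. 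The complete bipartite graph between $V_i$ and $V_j$ then contributes $|V_i||V_j| \ge (k+1)^2 \ge k$ crossing edges.
\end{compactitem}

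In both cases the cut has at least $k$ edges, so $B_k(H)$ is $k$-edge connected.

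The only step needing care is the inequality $a(s-a) \ge s-1$ in the split-cluster case. It holds because $a(s-a)$ is concave in $a$, so its minimum over $1 \le a \le s-1$ is attained at the endpoints, where it equals $s-1$. This is exactly where the lower bound $|V_i| \ge k+1$ from \eqref{eq:csize} is used.
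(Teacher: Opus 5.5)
Your proof is correct and takes essentially the same approach as the paper: a split-cluster versus union-of-whole-clusters case analysis, where the clique contributes $a(|V_i|-a)\ge |V_i|-1\ge k$ edges and the complete bipartite graph between adjacent clusters contributes $(k+1)^2\ge k$ edges. Your disconnected direction lifts an entire connected component of $H$ to an empty cut in $B_k(H)$, which is slightly more careful than the paper's version; the paper only notes that no edges are added between the two clusters $V_i$ and $V_j$.
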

\begin{proof} 
Observe that, if $H$ is disconnected, and consider two vertices $u_i$ and $u_j$ lying in different connected components.
According to Def.~\ref{def:blowup}, we do not add any edges between the corresponding clusters $V_i, V_j  \subseteq V(B_k(H))$, and hence $B_k(H)$ is disconnected as well.

Now suppose that $H$ is connected. 
Consider any nontrivial cut $S$ of $B_k(H)$:
First, suppose that $S$ splits a cluster $V_i$. Let $a = |S \cap V_i|$. The clique edges of $V_i$ contribute at least
\begin{align}
a(|V_i| - a) \ge |V_i| - 1 \ge k,
\notag 
\end{align}
where the latter inequality follows from \eqref{eq:csize}.

Otherwise, it must be that $S$ is a union of complete clusters $V_1,\dots V_\ell$. 
Since $H$ is connected, it follows that there is at least one edge $\set{u_j,u_{j'}}$ in the cut $\lt( \bigcup_{j=1}^{\ell} u_j, V(H) \setminus  \bigcup_{j=1}^{\ell} u_j \rt)$.
According to the construction of $B_k(H)$, the vertices in $V_j$ and $V_{j'}$ form a complete bipartite graph.
By \eqref{eq:csize}, we have 
\begin{align}
|V_j||V_{j'}| \ge (k+1)^2 \ge k
\end{align}
edges across the cut.
\end{proof}

The simulation of $B_k(H)$ is straightforward since each node knows the fixed partitioning of $[n]$ and can use its local neighborhood information to determine the edges incident to its simulated nodes in $B_k(H)$.
The node $u_i$ then simply concatenates the simulated messages and sends them to the referee, which, due to \eqref{eq:csize}, may  require a message length of up to 
$
|V_i|B \le 2(k+1)B
$
bits.
Applying Theorem~\ref{thm:main} to the simulated connectivity protocol, we get $
2(k+1)B \ge \Omega\lt( \min\Set{m,\log_2\frac{1}{\delta}}^{1/3} \rt)$, and therefore
\begin{align}
B &\ge \Omega\lt( \frac{1}{k} \cdot \min\Set{m,\log_2\frac{1}{\delta}}^{1/3}  \rt)\notag\\ 
&\ge \Omega\lt( \frac{1}{k} \cdot \min\Set{\frac{n}{k},\log_2\frac{1}{\delta}}^{1/3}  \rt),  \label{eq:lb2}
\end{align}
where the second inequality follows due to $m \ge \frac{n}{2(k+1)}$ (see \eqref{eq:csize}).

Now consider the exponentially small error regime, i.e., $\delta \le 2^{-\alpha n}$ and thus
\begin{align}
\log_2 \frac{1}{\delta} \ge \alpha n \ge \alpha m,\notag
\end{align}
which implies
\begin{align}
\min\Set{m,\log_2 \frac{1}{\delta}} \ge
m  \cdot \min\Set{1,\alpha }.\notag
\end{align}
Recalling \eqref{eq:lb2}, it follows that
\begin{align}
B = \Omega\lt( \min\Set{1,\alpha}^{1/3} \frac{n^{1/3}}{k^{4/3}} \rt)
= \Omega_\alpha\lt( \frac{n^{1/3}}{k^{4/3}} \rt). 
\end{align}

\section{AI Disclosure} \label{sec:AI}
We used GPT 5.5 as part of our research.
In particular, GPT 5.5 suggested to use the theorem of \cite{karpovsky1978coordinate} for bounding the size of a certain family  of the ordered balanced splits with a given 3-ary VC dimension, and also suggested the ``blow-up'' simulation for extending the results to $k$-edge connectivity. 
However, no AI-generated text was included in the paper. 
All content was written and verified by the authors, including all statements of lemmas, theorems, and their proofs. 
 The authors take full responsibility for all content.

\appendix
\section*{APPENDIX}
\section{Proof of Lemma~\ref{lem:random_access}} \label{app:random_access}
\begin{replemma}{lem:random_access}
\lemRandomAccess
\end{replemma}
\begin{proof}  
For each $i$, define
\begin{align}
  \epsilon_i&=\Pr[g_i(Z)\ne X_i],\notag\\ 
  E_i&=X_i\oplus g_i(Z).
\end{align}
Given $Z$, the variables $X_i$ and $E_i$ determine one another. Thus,
\begin{align}
  H(X_i\mid Z)=H(E_i\mid Z)\le H(E_i)=h_2(\epsilon_i).\notag 
\end{align}
By subadditivity,
\begin{align*}
  H(X\mid Z)
  &\le\sum_{i=1}^M H(X_i\mid Z)\\
  &\le\sum_{i=1}^M h_2(\epsilon_i)\\ \ann{by concavity of $h_2$}
  &\le M h_2\!\left(\frac1M\sum_{i=1}^M\epsilon_i\right)
   =M h_2(\epsilon).
\end{align*}
Since $X$ is uniform, $H(X)=M$.  Therefore
\[
  I(X:Z)=H(X)-H(X\mid Z)\ge M(1-h_2(\epsilon)).
\]
We conclude that
\[
  I(X:Z)\le H(Z)\le\log_2|\range(Z)|.
\]
\end{proof}

\bibliographystyle{alpha}
\bibliography{refs}

\end{document}